\newcommand{\ABHINAV}[1]{}
\renewcommand\footnotetextcopyrightpermission[1]{} 
\newcommand{\RR}{\mathbb{R}}
\crefname{section}{§}{§§}
\Crefname{section}{§}{§§}
\title[GossipGraD]
{GossipGraD: Scalable Deep Learning using Gossip Communication based Asynchronous Gradient Descent}
\author{Jeff Daily}
\affiliation{%
    \institution{Pacific Northwest National Laboratory}
}
\email{jeff.daily@pnnl.gov}
\author{Abhinav Vishnu}
\affiliation{%
    \institution{Pacific Northwest National Laboratory}
}
\email{abhinav.vishnu@gmail.com}
\author{Charles Siegel}
\affiliation{%
    \institution{Pacific Northwest National Laboratory}
}
\email{charles.m.siegel@gmail.com}
\author{Thomas Warfel}
\affiliation{%
    \institution{Pacific Northwest National Laboratory}
}
\email{thomas.warfel@pnnl.gov}
\author{Vinay Amatya}
\affiliation{%
    \institution{Pacific Northwest National Laboratory}
}
\email{vinay.amatya@pnnl.gov}
\begin{document} 
\begin{abstract} 
In this paper, we present {\em GossipGraD} -- a gossip communication protocol
based Stochastic Gradient Descent (SGD) algorithm for scaling Deep Learning
(DL) algorithms on large-scale systems. The salient features of GossipGraD are:
1) reduction in overall communication complexity from $\Theta(\log(p))$ for
$p$ compute nodes in well-studied SGD to $O(1)$, 2) model diffusion such that
compute nodes exchange their updates (gradients) indirectly after every
$\log(p)$ steps, 3) rotation of communication partners for facilitating direct
diffusion of gradients, 4) asynchronous distributed shuffle of samples during
the feedforward phase in SGD to prevent over-fitting,  5) asynchronous
communication of gradients for further reducing the communication cost of SGD
and GossipGraD.  We implement GossipGraD for GPU and CPU clusters and use
NVIDIA GPUs (Pascal P100) connected with InfiniBand, and Intel Knights Landing
(KNL) connected with Aries network. We evaluate GossipGraD using well-studied
dataset ImageNet-1K ($\approx$ 250GB), and widely studied neural network
topologies such as GoogLeNet and ResNet50 (current winner of ImageNet Large
Scale Visualization Research Challenge (ILSVRC)).  Our performance evaluation
using both KNL and Pascal GPUs indicates that GossipGraD can achieve perfect
efficiency for these datasets and their associated neural network topologies.
Specifically, for ResNet50, GossipGraD is able to achieve $\approx$ 100\% compute
efficiency using 128 NVIDIA Pascal P100 GPUs -- while matching the top-1 classification accuracy
published in literature.
\end{abstract}

\maketitle

\section{Introduction}
\label{sec:intro}
Deep Learning (DL) algorithms are a class of Machine Learning and Data Mining
(MLDM) algorithms.  A deep neural network (DNN) -- which stores the model of a
DL algorithm -- contains several {\em layers} of {\em neurons} inter-connected
with {\em synapses}. By using a large number of
layers, DL algorithms are able to conduct transformations on highly non-linear
data which is a common property in many science datasets.  DL algorithms have
demonstrated remarkable improvements in many Computer Vision
tasks~\cite{NIPS2012_4824,43022} and science domains such as High Energy
Physics~\cite{Baldi:2014kfa}, and Climate Modeling~\cite{liu2016application}.
Several DL implementations such as TensorFlow~\cite{tensorflow2015-whitepaper},
Caffe~\cite{jia2014caffe}, Theano~\cite{bergstra+al:2010-scipy,
Bastien-Theano-2012}, and Torch~\cite{Collobert02torch:a} have become
available. 

At the same time, DL algorithms are undergoing a tremendous revolution of their
own. Several types of DL algorithms -- primarily geared by input properties
(tabular, images, speech) -- are used by researchers/practitioners.  As an
example, Multi-layer Perceptrons (MLPs) are widely used for tabular data.
Similarly for images, videos and speech, Convolutional Neural Networks (CNNs)
and Recurrent Neural Networks (RNNs) are the {\em de facto} choice. However,
CNNs and RNNs are very computationally expensive, requiring days of training
time using GPUs even on modest size datasets.  Their computational requirements
are further worsened by: 1) very deep neural networks such as
GoogLeNet~\cite{inception} and Residual Networks ({\em
ResNet})~\cite{he2016deep}(Figure~\ref{fig:dnn}), and 2) an increasing volume of data produced by
simulations, experiments and handheld devices.  

An important solution to these problems is distributed memory DL algorithms
that are capable of execution on multi-device (such as multi-GPUs) and large
scale systems. Table~\ref{table:newcomparisons} shows the prominent approaches
for distributed memory DL implementations. 
We observe that the HPC ready implementations
typically incur $\Theta(\log(p))$ communication complexity, while a few of the
HPC ready implementations support overlap of communication with computation. 

\begin{figure}[hptb]
\centering
\small
\includegraphics[width=0.9\columnwidth]{./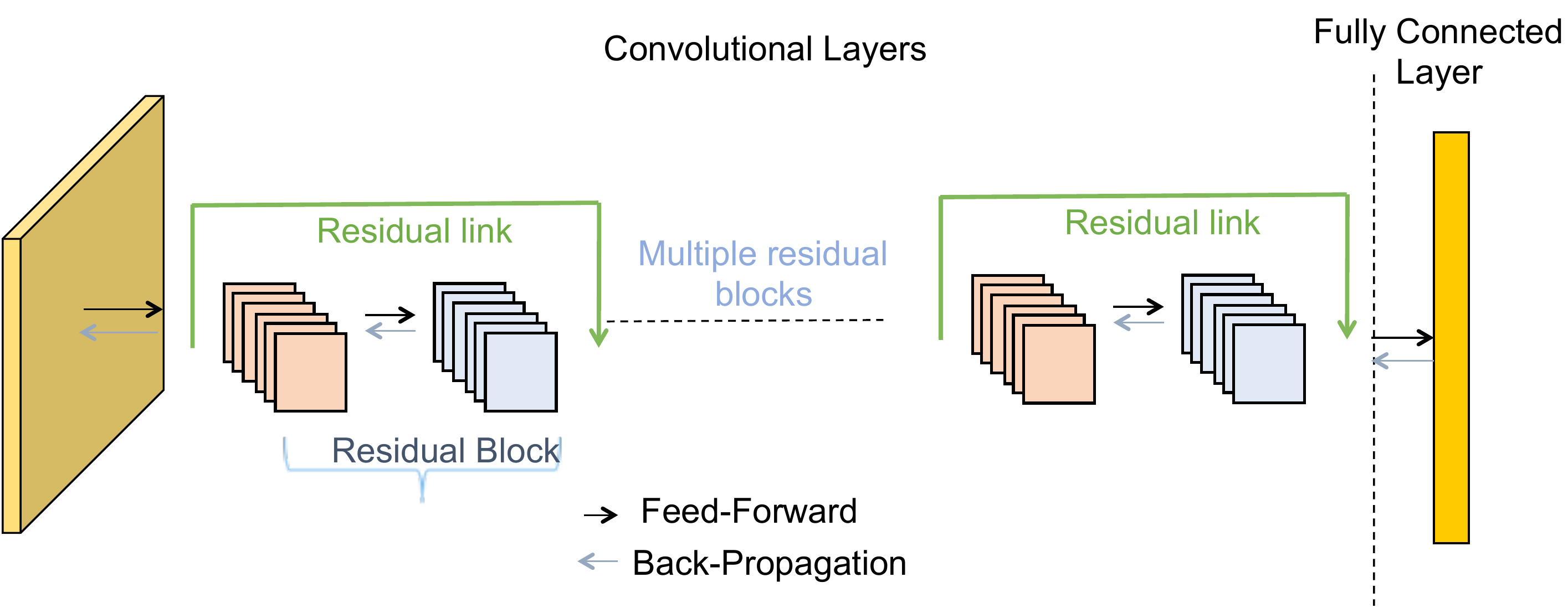}
\caption{\small 
An example of a neural network -- ResNet~\cite{he2016deep}. ResNet has two types of layers -- convolutional and fully connected. Besides feedforward and backprop links, it has a residual link. The residual block is a unit which is repeated to create very deep topologies}. 
\label{fig:dnn}
\end{figure}

DL algorithms primarily use the {\em Gradient Descent} method, which is an
iterative technique to update the weights of {\em synapses} using the
{\em error} between the {\em ground truth} (actual value) and the {\em
predicted value} (using the current state of the DNN). The larger the
difference, the steeper GD converges to a minima -- a low value of minima
generates the solution.  
An important type of Gradient Descent is
Batch/Stochastic Gradient Descent (SGD) -- where a random subset of samples are
used for iterative {\em feed-forward} (calculation of predicted value) and {\em
back-propagation} (update of synaptic weights). 
%

Consider a strong scaling scenario where a batch ($b$) of the input dataset is
split across multiple compute nodes ($p$) -- an example of {\em data
parallelism}.  Under such a scenario, the computation is bounded by
$\Theta(\frac{b}{p})$, although the communication -- requiring an all-to-all
reduction -- is bounded by $\Theta(\log(p))$. With weak scaling -- the work per
compute node remains constant, but the communication increases by $\log(p)$.
This argument is further validated by Goyal {\em et al.}~\cite{goyal:arxiv17}
and PowerAI~\cite{powerai}, since weak scaling
beyond it results in accuracy loss.  Additionally, in practice the actual
communication time deviates from $\Theta(\log(p))$ due to system issues as
pointed out by other researchers~\cite{hoefler:sc10,bhatele:sc13}.

\begin{table}[!htbp]
	\small
	\centering
	\begin{tabular}{|p{2cm}| p{1cm}| p{1cm}| p{1cm}| p{1cm}|}
		\hline
		Name  & Comm. Complexity & HPC Ready  & Comm. Overlap & Leverage Bisection BW\\
		\hline 
		FireCaffe~\cite{firecaffe} & $\Theta(\log(p))$ & \checkmark & \ding{55} & \checkmark \\
		S-Caffe~\cite{scaffe} & $\Theta(\log(p))$ & \checkmark & \checkmark & \checkmark \\
		MaTEx~\cite{matex} & $\Theta(\log(p))$ & \checkmark & \ding{55} & \checkmark \\
		Poseidon~\cite{poseidon} & $O(1)$ & \ding{55} & \checkmark & \ding{55} \\
		Petuum~\cite{petuum} &  $O(1)$ & \ding{55} & \ding{55} & \ding{55} \\
		GeePS~\cite{geeps} &  $O(1)$ & \ding{55} & \checkmark & \ding{55} \\
		ProjectAdam~\cite{projectadam} & $O(1)$ & \ding{55} & \checkmark & \ding{55} \\
		TensorFlow~\cite{tensorflow2015-whitepaper} & $O(1)$ & \ding{55} & \ding{55} & \ding{55} \\
		MXNET~\cite{mxnet} & $O(1)$ & \checkmark & \ding{55} & \ding{55} \\
		CaffeonSpark~\cite{caffeonspark} & $O(1)$ & \ding{55} & \checkmark & \ding{55} \\
		SparkNet~\cite{sparknet} & $O(1)$ & \ding{55} & \checkmark & \ding{55} \\
		CNTK~\cite{cntk} & $\Theta(\log(p))$ & \checkmark & \ding{55} & \checkmark  \\
		Parle~\cite{parle} &  $O(1)$ & \ding{55} & \ding{55} & \ding{55} \\
		PaddlePaddle~\cite{paddlepaddle} &  $O(1)$ & \checkmark & \checkmark & \ding{55} \\
		DistBelief~\cite{dean:nips12} &  $O(1)$ & \checkmark & \ding{55} & \ding{55} \\
		Caffe2~\cite{goyal:arxiv17} &  $\Theta(\log(p))$ & \checkmark & \checkmark & \checkmark \\
		DSSTNE~\cite{dsstne} & $O(1)$ & \checkmark & \ding{55} & \ding{55} \\
		Chainer~\cite{chainer} &  $\Theta(\log(p))$ & \checkmark & \ding{55} & \checkmark \\
		PowerAI~\cite{powerai} &  $\Theta(\log(p))$ & \checkmark & \checkmark & \checkmark \\
		EASGD3~\cite{you:sc17} &  $O(\log(p))$ & \checkmark & \checkmark & \ding{55} \\
		Blot {\em et al.}~\cite{blot:arxiv16} & $O(1)$ & \ding{55} & \checkmark & \ding{55} \\
		\hline
		Others~\cite{ho:nips13,pan:arxiv17,ichinose:imcom17} & $O(1)$ & \ding{55} & \checkmark & \ding{55} \\
		\hline
		{\bf GossipGraD} & {\bf $O(1)$} & {\bf \CheckmarkBold}  & {\bf \CheckmarkBold} & {\bf \CheckmarkBold} \\
		\hline
	\end{tabular}\\
	\caption{\small 
	Comparison of GossipGraD with the major distributed
	Deep Learning approaches. HPC ready implementations leverage the HPC interconnects natively such as either using MPI or native interfaces.
	}
	\label{table:newcomparisons}
\end{table}

Several researchers have proposed methods to alleviate the communication
requirements of distributed
SGD~\citep{NIPS2012_0598,geeps,mxnet,deepimage,li:nips14}. The parameter server
based approaches (shown in Figure~\ref{fig:servergossip}(a))  use a server to hold the latest copy of the model, while
worker(s) send gradients (computed as a function of error) and request the
latest weights.  An important aspect of the parameter server is the expected
constant communication complexity ($O(1)$ instead of $\Theta(\log(p))$ for SGD)
. However, researchers have pointed out the deficiencies of parameter server
based approaches~\cite{li:nips14,scaffe}: 1) a single parameter server becomes
a performance bottleneck, 2) multiple parameter servers (such as using GPUs)
result in waste of compute resources~\cite{scaffe}, 3) communication between
parameter servers in presence of multiple servers is expensive, and 4)
parameter servers require warm-start to facilitate
convergence~\cite{li:nips14}.  Because of those limitations, parameter servers
have only been practical for relatively small clusters, and are
therefore excluded from further consideration in this paper.

Data parallelism based implementations such as
FireCaffe~\cite{firecaffe},S-Caffe~\cite{scaffe}, PowerAI~\cite{powerai},
Caffe2~\cite{goyal:arxiv17}, MaTEx~\cite{matex}, and others~\cite{das:arxiv16}
use all-to-all reduction and other collective operations and use the bisection
bandwidth of communication network effectively (\cref{table:newcomparisons}).
Recently proposed S-Caffe~\cite{scaffe}, PowerAI~\cite{powerai} and
Caffe2~\cite{goyal:arxiv17} also overlap communication with compute
(back-propagation). However, even with these optimizations, communication
becomes a bottleneck at scale even with asynchronicity such as evident in their
evaluation.  

Hence, alternate -- possibly complimentary -- approaches are required to scale
DL implementations further. 
There are two primary approaches published in literature of gossip based DL as
proposed by Jin {\em et al.}~\cite{jin:arxiv16} and Blot {\em et
al.}~\cite{blot:arxiv16}. Under these approaches, a random communication partner is
selected for sending model updates as shown in Figure~\ref{fig:servergossip}(b). These updates are applied by the receiving
compute node and the process is repeated iteratively.  However, both Jin {\em
et al.} and Blot {\em et al.} report performance and convergence degradation at
modest scale.  Up on careful review, we attribute the problems
to: 1) communication imbalance since random selection does not guarantee
balanced communication across compute nodes, 2) communication overhead due to
lack of asynchronicity, and 3) poor convergence at scale due to imbalanced
diffusion of gradients. 
Architecture specific approaches such as proposed by  You {\em et
al.}~\cite{you:sc17} provide support for GPU and KNL architectures, but their
efficiency decreases sharply on using 64 KNL nodes -- which implies that it is
not a feasible solution for larger scale systems.

\begin{figure}[hptb]
\centering
\small
\includegraphics[width=0.7\columnwidth]{./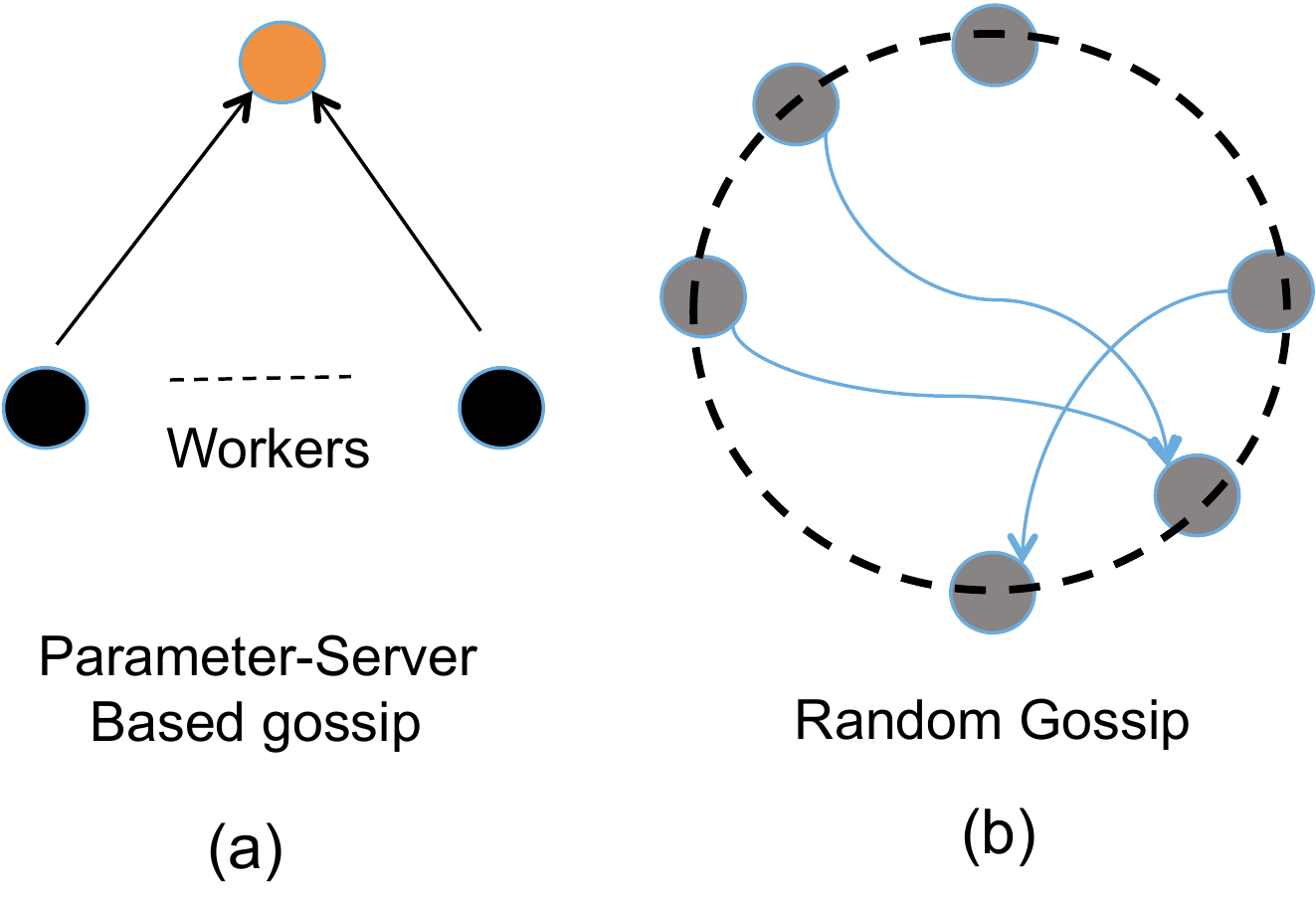}
\caption{\small (a) Parameter-server based implementation -- which is an
extreme form of all-to-one gossip, (b) Random gossip as considered by Blot {\em
et al.}~\cite{blot:arxiv16} and Jin {\em et al.}~\cite{jin:arxiv16}}
\label{fig:servergossip}
\end{figure}
\subsection{Problem Definition and Synopsis}
Existing all-to-all reduction based approaches provide excellent convergence,
but limited scalability. At the same time, parameter server and existing gossip
communication based approaches provide constant communication complexity, but
limited convergence.  The important question is: {\em Can we design a set of distributed memory DL algorithms
that provides constant communication complexity while leveraging asynchronous
communication, but similar convergence properties as sequential SGD?} That is
the objective of this paper.

To achieve this objective, we propose GossipGraD -- a novel gossip
communication protocol based SGD -- to design distributed DL algorithms. Our
algorithm addresses the limitation of existing gossip based approaches by
eliminating communication imbalance, introducing asynchronicity such that the
communication is completely overlapped with computation, and shuffling the
communication partners and the training dataset asynchronously to provide
better diffusion of model updates and preventing over-fitting.


\subsection{Contributions}
Specifically, we make the following contributions in this paper:
\begin{itemize}
\item We present the design choices for communication partner selection
in GossipGraD. Specifically, we present partner selection such
that each compute node communicates with exactly one partner at
each step and after $\log(p)$ steps, all compute nodes have
communicated indirectly.
\item We present techniques for batch-wise rotation of partners for
further enabling direct diffusion of model updates. We consider
layer-wise/batch-wise approaches for asynchronous communication
of model updates to minimize the observed communication time
during the training phase. 
\item To prevent over-fitting (a situation where the model has memorized
the training set), we present {\em asynchronous} distributed memory shuffle of
samples. The shuffle does not incur additional communication
complexity since it is overlapped with the feedforward step.
\item We provide theoretical underpinning of GossipGraD (and its heuristics), 
and present the case that GossipGraD converges to a similar solution as default SGD.
\item We use Caffe -- a widely available toolbox -- as our implementation baseline.
We implement GossipGraD using vendor optimized Caffe --
Intel-Caffe for Knights Landing (KNL) based systems and NVIDIA-Caffe for GPU
based systems. Each of these implementations are also extended
for distributed memory implementation by using MPI~\cite{mpi1,mpi2} such as considered by other researchers including Caffe2, PowerAI, and S-Caffe.
\end{itemize}

We evaluate GossipGraD using the well-studied dataset ImageNet-1K (1.3M images,
$\approx$ 250GB)~\cite{ILSVRC15}, and neural network topologies such as
GoogLeNet~\cite{43022} and ResNet~\cite{he2016deep} using 128 (32 x 4) Pascal
P100 GPUs, and LeNet3 and CIFARNet using 32 node Intel KNL system (Aries
interconnect). Our performance evaluation indicates the effectiveness of
GossipGraD on a variety of architectures. Specifically, we are able to provide
complete overlap of communication with computation for ResNet50 on 128 GPUs
resulting in $> 99\%$ efficiency and match top-1 accuracy
published by other researchers (PowerAI, Caffe2 and Chainer). As an example,
after 30 just training epochs of ResNet50, GossipGraD achieves a top-1 accuracy of
50\% -- which matches with the accuracy published by Caffe2, Chainer and PowerAI.
Our theoretical
underpinnings and accuracy evaluation indicate that GossipGraD provides similar
accuracy as the well-studied SGD algorithm. Hence, GossipGraD is suitable for
deployment on extreme scale systems.

The rest of the paper is organized as follows: In section~\ref{sec:background},
we present the background of the proposed research. We present baseline setup
in section~\ref{sec:design}, GossipGraD design in section~\ref{sec:gossipgrad},
introduce asynchronicity in section~\ref{sec:asyncgd} and theoretical proof of
convergence in section~\ref{sec:proof}. We discuss experimental results in
section~\ref{sec:exp}, related work in section~\ref{sec:related},  followed by
conclusions in section~\ref{sec:conclusions}.

\section{Background}
\label{sec:background}

\subsection{Gradient Descent Algorithm}
The gradient descent algorithm -- which iteratively updates the weights
of a deep neural network to a local minimum of the cost function -- is the most
widely used DL algorithm. 
In the algorithm, each sample (which could be tabular
observation, an image, or speech/text), is an input to the {\em feed-forward}
step. The output of each feed-forward step is a {\em predicted value}, which is
compared with the {\em label} (ground truth), and their difference is used to
calculate the {\em gradients} -- which are applied for updating the weights. 
In
essence, the feed-forward step calculates the {\em loss} -- a measure of
difference between label and predicted values, and the objective is to minimize
the loss on the training set, while ensuring that the validation loss decreases
as well. The general technique is typically referred to as {\em gradient
descent} towards a minimum of the loss function (Figure~\ref{fig:sgd_loss_example}).

\begin{figure}[hptb]
\centering
\includegraphics[width=0.9\columnwidth]{./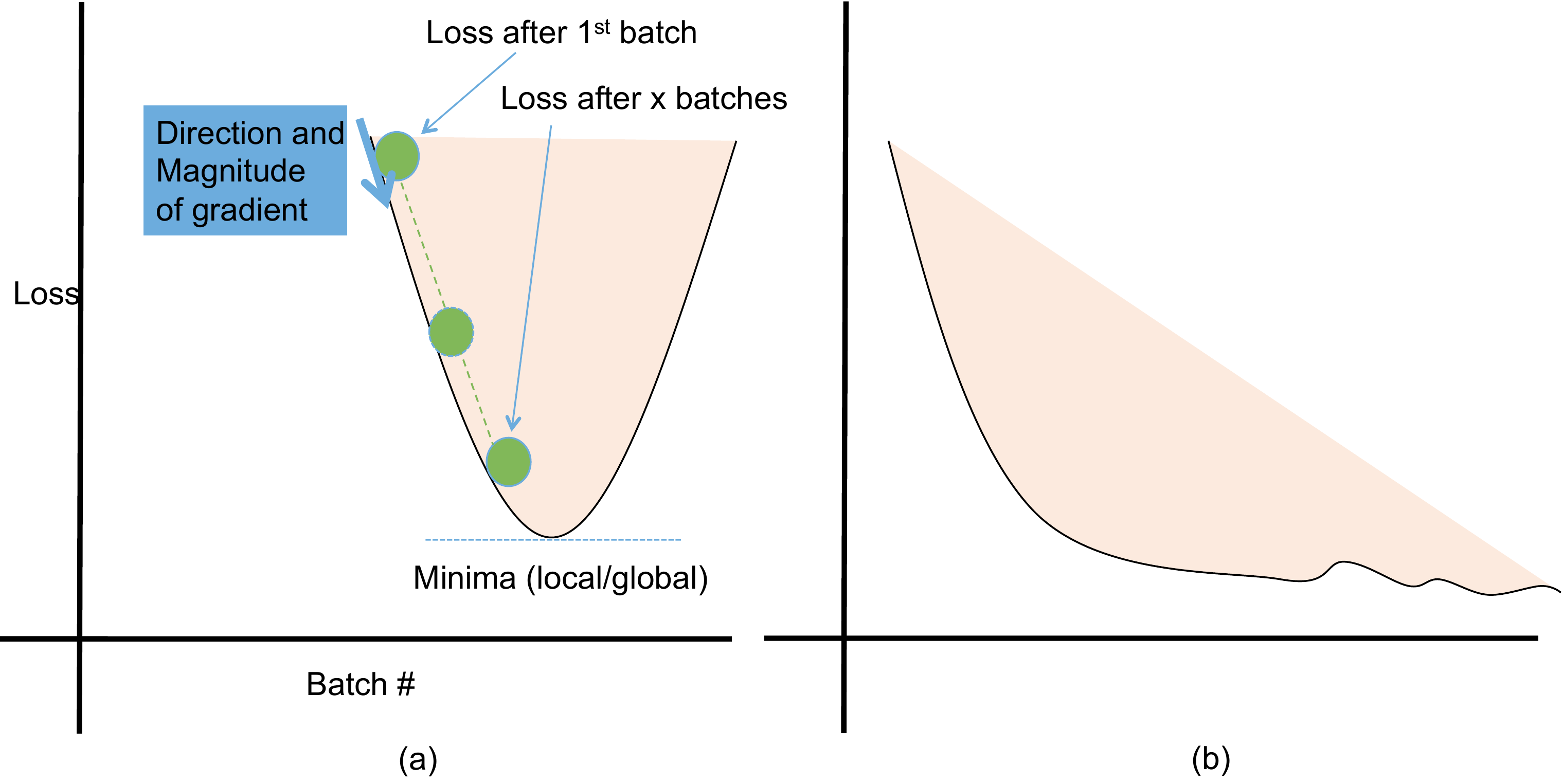}
\caption{(a) A pictorial representation of a loss curve as function of batch -- each batch implies a model update,
(b) A typical DL loss curve containing long-tail and non-convex shape}
\label{fig:sgd_loss_example}
\end{figure}

For gradient descent and its variants, the rule(s) for updating the weights
during the back-propagation step can be presented as follows:
\begin{eqnarray}
	\label{graddesc}
	\mathbf{w}'&=&\mathbf{w}+\lambda \nabla_{\mathbf{w}}C\\
	\mathbf{b}'&=&\mathbf{b}+\lambda \nabla_{\mathbf{b}}C
\end{eqnarray}
where $\mathbf{w}$ are the weights, $\mathbf{b}$ the biases, $\lambda$ the
learning rate, and $C$ is a cost function to be optimized which is usually square error
or cross-entropy.   To compute the derivatives, we set $W^{(\ell)}$,
$b^{(\ell)}$ to be the weights and biases for each layer,
$z^{(\ell+1)}=W^{(\ell)}a^{(\ell)}+b^{(\ell)}$ and
$a^{(\ell)}=\sigma(z^{(\ell)})$, where $\sigma$ is the activation function, and
we set $n_\ell$ represent the number of layers.  The details of the algorithm
are presented in Algorithm~\ref{alg:bp}.

\begin{algorithm}
	\caption{Back Propagation~\cite{marsland2015machine}}
	\begin{algorithmic}[1]
		\label{alg:bp}
		\STATE \textbf{input:} Data $X\in \RR^{n\times p}$ and labels $Y\in \RR^{n\times l}$
		\STATE Compute all $z^{(\ell)}$ and $a^{(\ell)}$.
		\STATE $\delta^{(n_\ell)} = -(y-a^{(n_{\ell})})\odot \sigma(z^{(n_\ell)})$
		\FOR{$\ell$ from $n_\ell-1$ to 1}
		\STATE $\delta^{(\ell)}=W^{\ell} \delta^{(\ell+1)}\odot \sigma'(z^{(\ell)})$
		\ENDFOR
		\STATE $\nabla_{W^{(\ell)}}C = \delta^{(\ell+1)}{a^{(\ell)}}^T$
		\STATE $\nabla_{b^{(\ell)}}C = \delta^{(\ell+1)}$
	\end{algorithmic}
\end{algorithm}

\section{Baseline Setup}
\label{sec:design}
In this section, we present a solution space for scaling SGD on distributed
memory systems. We first present a distributed memory implementation of {\em
Vanilla SGD} (simply referred as SGD for rest of the paper).  SGD provides a
baseline for performance and accuracy comparisons with GossipGraD.
Table~\ref{table:modeling} shows the symbols used for time-space complexity
analysis of the proposed approaches.

\begin{table}[!t] 
\centering
\begin{tabular}{|c|c|c|c|c|}
\hline
& Method & Symbol\\
\hline 
1 & Input Dataset & $X$\\   
2 & Batch size & $b$\\   
3 & Number of Processes & $p$ \\  
4 & Network Latency & $l$    \\
5 & Network Bandwidth & $\frac{1}{G}$    \\
6 & Set of Layers & $L = {L_0 .. L_{n - 1}}$    \\
\hline
\end{tabular}\\  
\caption{Symbols used for GossipGraD and Other Approaches}
\label{table:modeling}  
\end{table}

\subsection{Distributed Memory SGD}
These are several design choices for scaling SGD on distributed memory systems.
Specifically, to maintain equivalence to the sequential algorithm, a strong
scaling approach may be useful. Considering a batch size ($b$), the overall
work for ($p$) processes is expected to be $\Theta\left(\frac{b}{p} +
||L||\log{p}\right)$, where $||L||$ is the overall size of the synaptic weights
in a DNN, and an all-to-all reduction is conducted among all MPI processes.
The $\log(p)$ term is obtained due to the time-complexity of the reduction tree
typically implemented as a binomial/k-nomial tree in distributed systems. For
weak-scaling, we expect the time complexity to be $\Theta(b + ||L||\log{p})$,
since the work/process remains constant.  Figure~\ref{fig:vanilla_sgd} shows an
example of distributed memory SGD with ResNet topology.

\begin{figure}[hptb]
	\centering
	\includegraphics[width=0.8\columnwidth]{./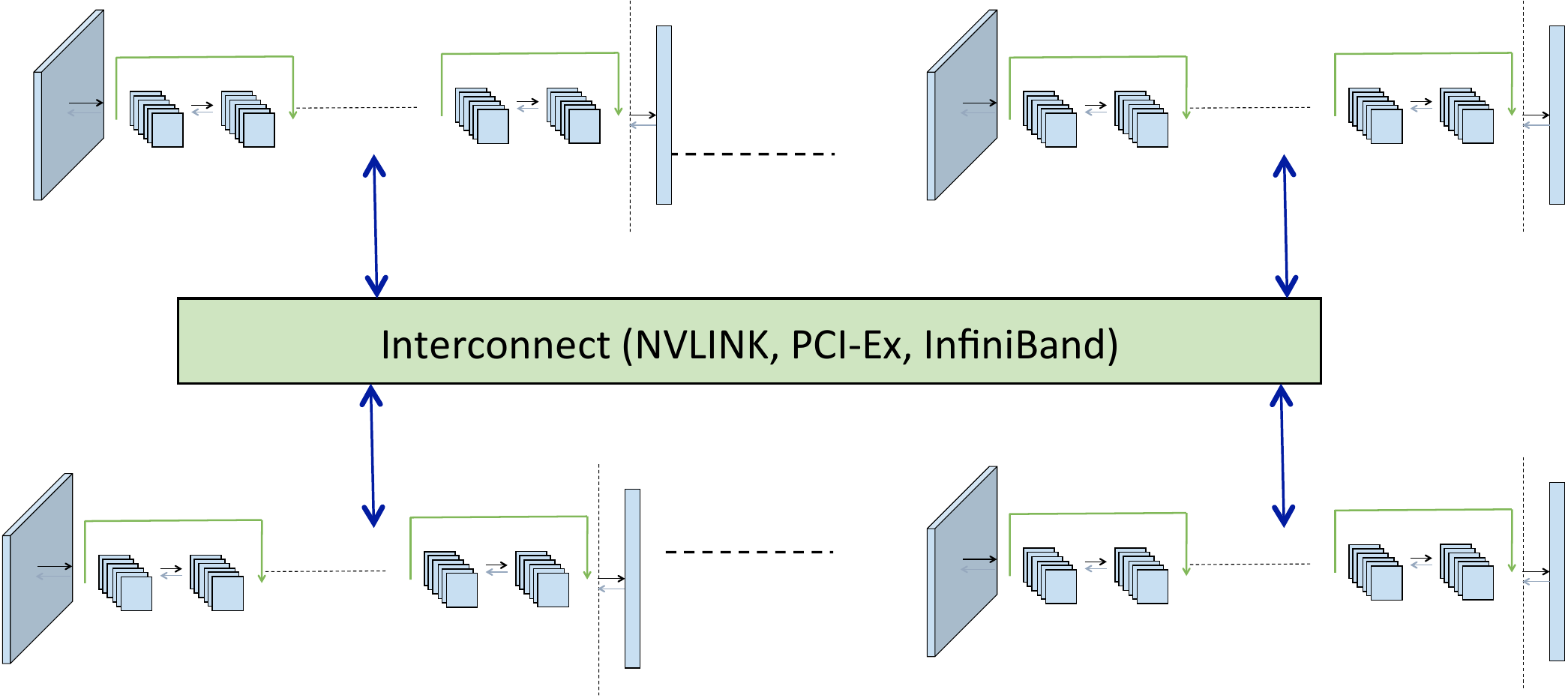}
	\caption{Pictorial representation for SGD using ResNet
	example. Data parallelism is used by replicating the ResNet model
	across all compute nodes. All-to-all reduction is executed after
	back-propagation is complete.}
	\label{fig:vanilla_sgd}
\end{figure}

\subsection{Scaling Limitations of SGD: Possible Solutions}
An advantage of distributed SGD is its strict equivalence to the sequential SGD
algorithm. However, the communication complexity -- bounded by
$\Theta(\log(p))$ -- becomes a limiting factor at scale.  Hence, it is
important to consider alternate techniques to scale SGD on large scale systems.
One possibility is to communicate to fewer processes -- possibly to exactly one
process after each batch update. This reduces the communication complexity from
$\Theta(\log(p))$ to $O(1)$. In section~\ref{sec:gossipgrad}, we consider design
approaches for GossipGraD which achieves this objective.  

Another (complimentary) possibility is to introduce asynchronicity in SGD (AGD)
and GossipGraD. Recently proposed solutions such as S-Caffe~\cite{scaffe},
Caffe2~\cite{goyal:arxiv17} and PowerAI~\cite{powerai} overlap communication
with computation during the back-propagation step.  This is evident from
Algorithm~\ref{alg:bp}, since gradients for each layer are ready for
communication before back-propagation is initiated on preceding layers.  We
present design choices for asynchronous execution in section~\ref{sec:asyncgd}.


\section{GossipGraD Design}
\label{sec:gossipgrad}
In this section, we present a solution space for designing GossipGraD.  We
first present an intuition for GossipGraD.  We consider several elements of
GossipGraD: 1) communication partner selection for exchanging model updates, 2)
techniques for asynchronous data shuffle for preventing over-fitting, 3)
low-overhead partner rotation for better diffusion of model updates and
accelerating convergence to the solution, and 4) implementations on state of the art GPU and CPU clusters. 

\subsection{Extreme Case: No Communication}
Let us consider the basic SGD algorithm presented earlier. After each batch, an
all-to-all reduction (typically implemented using \texttt{MPI\_Allreduce}) is
conducted, which takes $\log(p)$ communication steps for completion.  The
objective of GossipGraD is to reduce the communication complexity of SGD from
$\log(p)$ to $O(1)$. One possibility is to completely eliminate communication
from SGD.  This approach is attractive for minimizing communication complexity.
However, the no-communication approach has two problems: 1) each model (by data
parallelism) learns only on a subset of data.  This implies that the models
drift apart further at increasing scale of compute nodes, and 2) this approach
would produce an ensemble of DL models, while our objective is to produce a
single model at the end of the training phase. Hence, we disregard no communication as a viable solution to addressing the communication complexity of DL algorithms.

\subsection{Intuition for Gossip Communication}
Gossip -- as frequently observed in social networks -- is frequently used in
computer-to-computer communications especially in distributed systems.  Compute
nodes may select a random partner for gossip, and each exchange is not expected
to be reliable.  However -- over a period of time -- gossip communication is 
expected to provide robust information across partners.  This property of
gossip communication protocols is the premise of GossipGraD.
As each pair of compute nodes gossip their model updates -- with a potentially
different partner at each step --  the expectation is that individual models of
compute nodes steadily converge towards the same model (theoretical
underpinnings are presented in section~\ref{sec:proof}). SGD usually takes
several epochs (as an example ResNet requires 90 epochs), where each epoch may
have thousands of batches.  The premise of GossipGraD is that by {\em
gossiping} between compute nodes over thousands of batches, the model updates
are diffused across all compute nodes.

Figure~\ref{fig:servergossip} demonstrated the limitations of existing gossip
based approaches, including the parameter server approaches, which can be
considered as an extreme form of gossip.  The approaches presented by Jin {\em
et al.}~\cite{jin:arxiv16} and Blot {\em et al.}~\cite{blot:arxiv16} which use
random gossip suffer from communication imbalance, poor diffusion of gradients,
lack of data shuffle and asynchronous exchange of gradients.  The proposed
GossipGraD design intends to address each of these limitations. We begin with a
discussion on selection gossip communication partner.

\subsection{Gossip Communication Partner Selection}
The approach of selecting a communication parter should have the following
properties: 1) constant communication complexity -- such that the solution is
scalable to very large scale systems, 2) balanced communication such that at
each step there is a unique set of communicating nodes, 3) diffusion of
gradients in sub-linear time, and 4) leveraging the bisection bandwidth of the
communication network.  

To facilitate constant communication complexity and balanced
communication, we propose a hierarchical virtual organization of compute nodes
such as in hypercube topology (shown in Figure~\ref{fig:gossip_approaches}(a)).
Other configurations such as dissemination-based (shown in
Figure~\ref{fig:gossip_approaches}(b)) have similar property as hypercube
virtual topology. However, dissemination-based approach is better since it
communicates with exactly two compute nodes at each step. Another important
property of hypercube based and dissemination based approaches is that all
compute nodes have communicated indirectly with each other in exactly $\log(p)$
batches -- which achieves our objective of sub-linear diffusion of model updates.

\begin{figure}[hptb]
	\centering
	\includegraphics[width=1.0\columnwidth]{./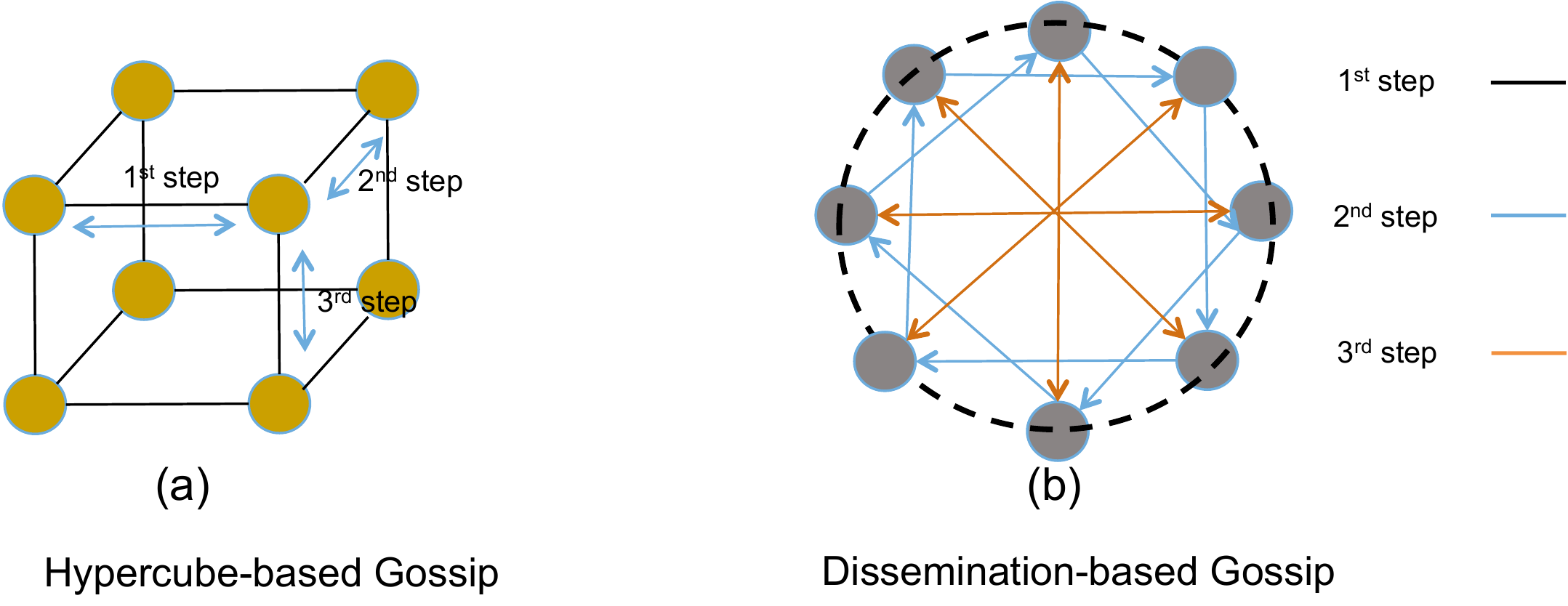}
	\caption{(a) Hypercube based virtual topology organization of compute nodes (b) dissemination based organization of compute nodes}
	\label{fig:gossip_approaches}
\end{figure}
\subsection{Expected diffusion of gradients with Hypercube and Dissemination Virtual Topologies}
\subsubsection{Diffusion in Hypercube Topology}
An example is shown in Figure~\ref{fig:gossipgrad_intuition}.  Intuitively, the
expectation is that by exchanging gradients with a partner, the gradients are diffused {\em indirectly} over a period of steps. As evident from the
figure, after $\log(p)$ steps, the gradients have been indirectly diffused with
all compute nodes. 

\begin{figure}[hptb]
\centering
\includegraphics[width=1.0\columnwidth]{./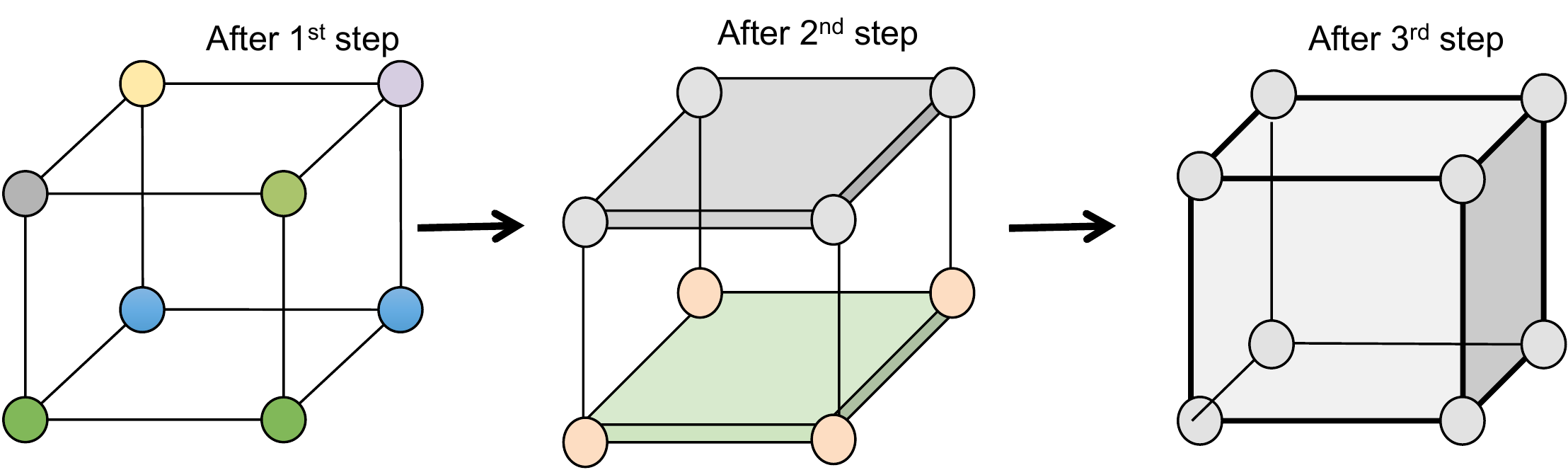}
\caption{
The expected diffusion of gradients
across steps. After 1st step, set of pairs
have same gradient, after 2nd step, top
and bottom surfaces, and after 3rd step, 
the entire set of processes.}
\label{fig:gossipgrad_intuition}
\end{figure}


\subsubsection{Diffusion in Dissemination Topology}
The primary difference between dissemination and hypercube 
algorithms is the partner selection. At step $k$ of the all-to-all reduction, a
process $p_i$ sends data to a process with MPI rank $(p_i + 2^k) \% p $, and receives
from $(p_i + p - 2^k) \% p $.
Figure~\ref{fig:dissemination} shows the steps in classical all-to-all
reduction and GossipGraD, when using the dissemination algorithm.  Similar to
hypercube exchange, dissemination provides $\log(p)$ time complexity. 
\begin{figure}[hptb]
\centering
\includegraphics[width=0.7\columnwidth]{./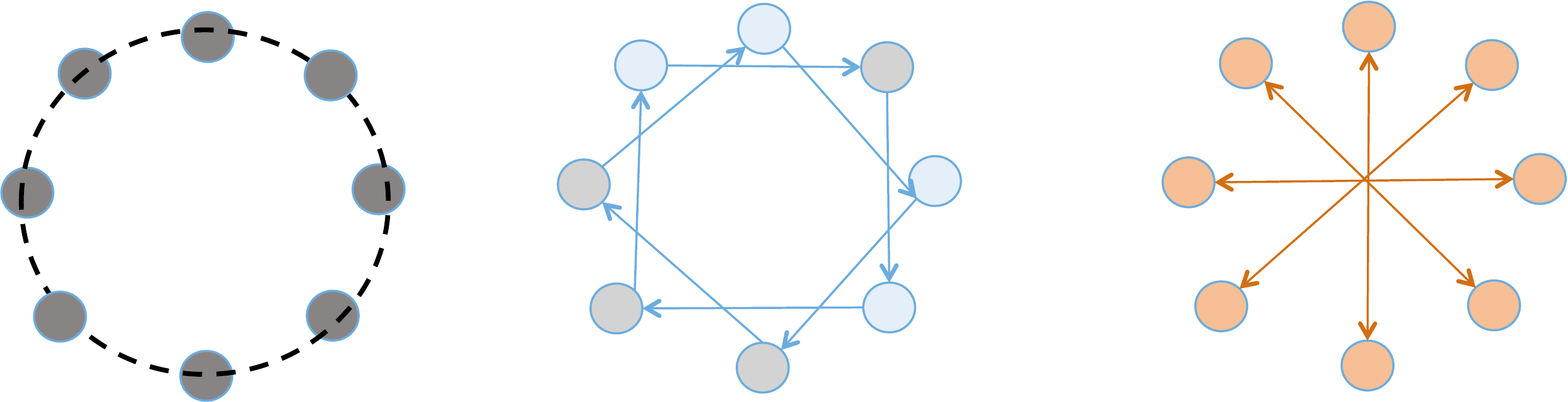}
\caption{An example of exchanges in dissemination algorithm for GossipGraD. 
Unlike hypercube exchange, each process sends and receives from a different partner and completes indirect exchange in $\log(p)$ steps}
\label{fig:dissemination}
\end{figure}

However, for GossipGraD, the difference between dissemination and hypercube
based partner selection is worth consideration.  Specifically, by using the
dissemination algorithm, GossipGraD is diffusing gradients from two partners at
each step, while hypercube exchange is diffusing from exactly one partner.
Hence, we primarily
consider dissemination exchange based partner selection for GossipGraD.

\subsection{Handling Side-Effects of Reduced Communication}
A possible side-effect of the reduced communication is divergence from the
sequential algorithm -- since only a small subset of compute nodes are
communicating at each step. We propose two techniques to address this:

\subsubsection{Partner Rotation}
Let us consider an execution of dissemination exchange based GossipGraD. We observe
that the communication partners are repeated after $\log(p)$ steps. As a
result, the direct diffusion of gradients is restricted to a small fraction
($\frac{\log(p)}{p}$) partners.  To alleviate this limitation, we propose a
{\em partner rotation} based approach. The premise of the rotation based
approach is such that the communication partners are
modified after every $\log(p)$ steps. The rotation based approach facilitates
direct diffusion of gradients to all compute nodes over a period of time,
without increasing the time complexity of GossipGraD.  
To achieve this objective, we consider $p$ random shuffles of the original 
communicator. After each $\log(p)$ steps, the next shuffled communicator is
used for creating a new virtual dissemination topology. Since the communicators 
are created at start of the application, the overall cost of creating them is
easily amortized over long running training phase of the DL implementation.


\subsubsection{Rotation of Samples}
The majority of distributed memory DL implementations read the samples once in
memory at the beginning of the training phase, and use them repeatedly till
convergence. This approach is sufficient for default SGD implementation --
since all compute nodes communicate after every batch. Since GossipGraD reduces
the communication complexity, we propose a distributed {\em shuffle} of the
samples such that after a compute node has used a batch of samples, it sends
the samples to another compute node. To develop a different topology from the
dissemination topology of gradients, we consider a ring virtual topology for
the shuffle of samples, where each compute node sends a recently completed
batch to its neighbor. A simple yet effective shuffle provides a nice property that each sample is considered again for feedforward/backpropagation step 
by a compute node only after all other compute nodes have considered it once.



\section{Asynchronous GossipGraD}
\label{sec:asyncgd}
As discussed in the previous section, we have reduced the communication
complexity from $\Theta(\log(p))$ to $O(1)$. However, we have added the shuffle of
samples -- potentially increasing the overall communication observed by the
compute nodes at the training phase. Another important consideration is
leveraging asynchronous communication for exchanging model updates. We leverage
the property of feedforward and backpropagation step to asynchronously shuffle
the samples and exchange the model updates.

An important observation from Algorithm~\ref{alg:bp} and GossipGraD design is
that the gradients for each layer are ready for diffusion before the gradients
for preceding layers are computed. Hence, it is possible to overlap the
computation of gradients for preceding layers by communicating gradients of the
current layer. Existing approaches such as S-Caffe~\cite{scaffe},
PowerAI~\cite{powerai} and Caffe2~\cite{goyal:arxiv17} have made similar
observation for all-to-all reduction. We use a similar property for
point-to-point communication in GossipGraD.

\subsection{Distributed Asynchronous GossipGraD using MPI Non-Blocking Primitives}
For GossipGraD, we use
non-blocking point-to-point communication primitives --\texttt{MPI\_Isend} and
\texttt{MPI\_Irecv}.  We implement GossipGraD by initiating non-blocking
send/receive, as soon as the gradients for a layer are ready.
Figure~\ref{fig:palagrad_nbcl} shows a pictorial representation. The
non-blocking MPI requests return an MPI handle. To maximize the available
overlap, our implementation generates the communication requests and executes
\texttt{MPI\_TestAll} followed by an \texttt{MPI\_WaitAll} function, after the gradients for all the layers have
been computed.
\begin{figure}[hptb]
\centering
\includegraphics[width=\columnwidth]{./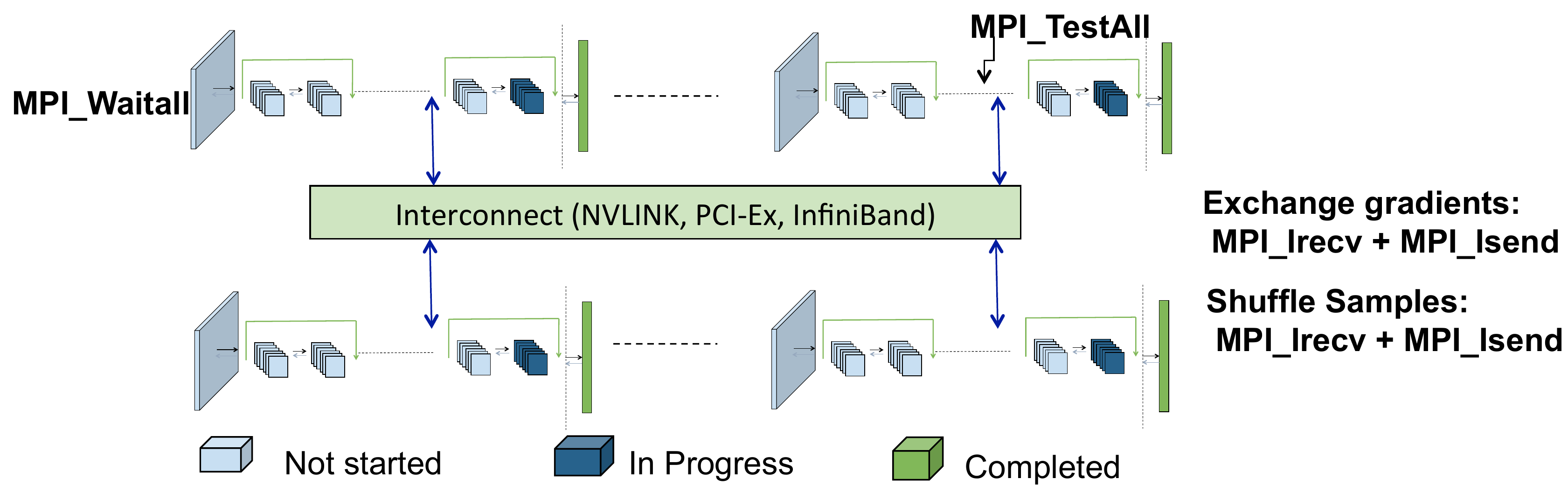}
\caption{A representative execution of AGD and GossipGraD using non-blocking MPI primitives}
\label{fig:palagrad_nbcl}
\end{figure}

\subsection{Distributed Asynchronous GossipGraD using Asynchronous Progress Thread}
A critical problem with non-blocking point-to-point and collective operations
is that it requires MPI runtimes to make asynchronous progress/hardware
support. However, this feature is not necessarily available in all implementations.  At
the same time, blocking operations (such as \texttt{MPI\_Allreduce}) are
heavily optimized by vendors, since they are used in many large scale
applications.  
Let us consider ResNet50 -- which has about 25M parameters (100 MBytes of data) -- to
motivate the need for an asynchronous thread based implementation.  Many of the
layers communicate large data ($>$ 1MBytes) -- which is much greater than the
size of {\em rendezvous threshold} in MPI implementations. Hence, we expect
that the majority of communication uses rendezvous protocol for
point-to-point communication, which necessitates asynchronous communication
progress. Sur {\em et al.} have proposed MPI implementations to support
asynchronous progress~\cite{sur:ppopp06} -- although they may not be available
in practice. 

Hence, we use an asynchronous thread to ensure progress on blocking
point-to-point and collective communication primitives.  Since each message
transfer is sufficiently large to saturate the network bandwidth, a single
communication thread is sufficient.  Once the gradients for a layer are
generated, the main compute thread {\em enqueues} a pointer to the gradients on
a {\em communication queue}. The asynchronous thread dequeues from the
communication queue and issues a blocking point-to-point/all-to-all reduction
operation. Figure~\ref{fig:palagrad_design} provides a pictorial representation
of the asynchronous implementation.

\begin{figure}[hptb]
	\centering
	\includegraphics[width=\columnwidth]{./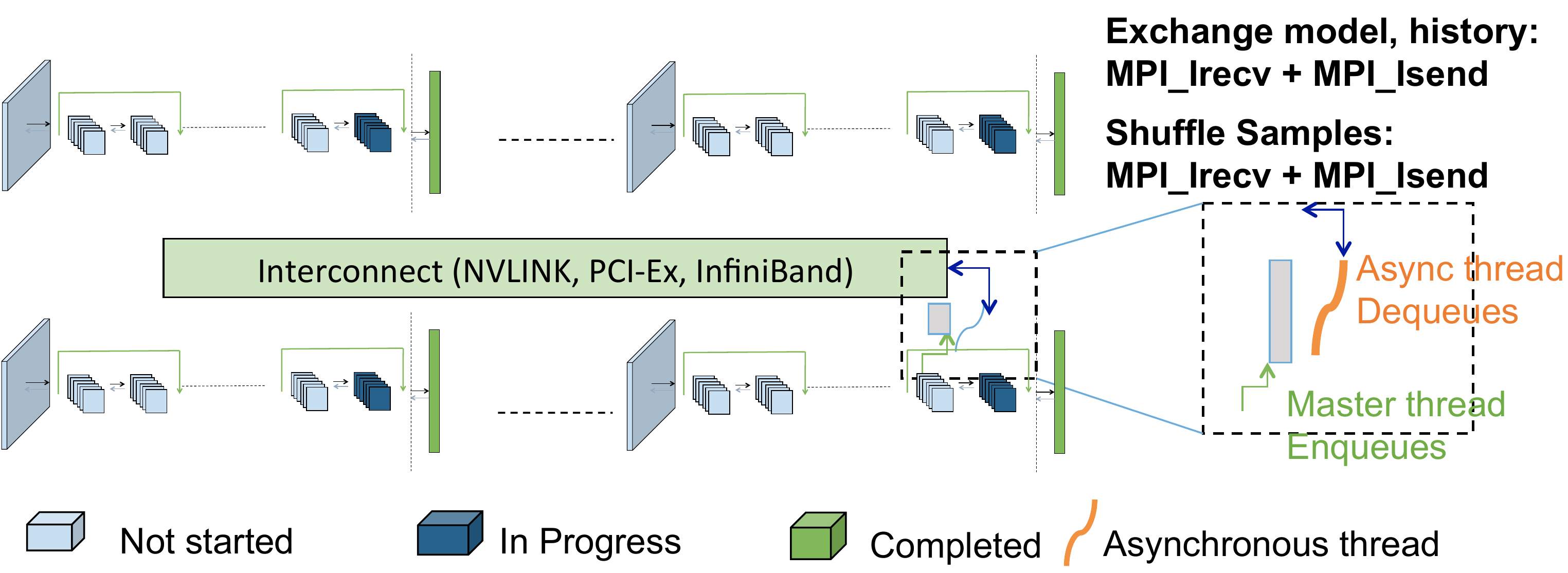}
	\caption{GossipGraD implementation using blocking collectives with
	asynchronous communication threads. Main thread enqueues the gradients,
	which are dequeued by the communication thread}
	\label{fig:palagrad_design}
\end{figure}

\subsubsection{Considerations for Implementation}

There are several important considerations in this implementation: 1) The
thread contention is insignificant, since the master thread enqueues the
gradients, and a single asynchronous thread, which dequeues for
point-to-point/all-to-all reduction, 2) the implementation requires
multi-threaded MPI which is available on most platforms, 3) presence of an
asynchronous thread does not affect the performance of GPU based implementation
(since thread is scheduled on CPUs), for CPU based implementation a single
communication thread negligibly affects the available parallelism. 

Yet, in practice the performance of multi-threaded MPI is not consistent across
all platforms. Specifically, we have observed a performance degradation when
\texttt{MPI\_THREAD\_MULTIPLE} is used, and depending up on the implementation,
an asynchronous progress may not be provided. Fortunately, using
\texttt{MPI\_TestAll} -- which is a non-blocking operation to invoke progress
engine in MPI -- we have observed an expected overlap of communication with
computation. Hence, we discard the asynchronous thread based implementation and
use the non-blocking primitive and testall based implementation for communication.

\section{Sketch of Proof of Convergence} 
\label{sec:proof}

\begin{table}[!t] 
\centering
\begin{tabular}{|c|c|c|c|c|}
\hline
& Meaning & Symbol\\
\hline 
1 & Weights and Biases & $w$\\   
2 & Cost Function & $C$ \\  
3 & $i$th sample & $x_i$ \\
4 & $i$th label & $y^L_i$ \\
5 & predicted value for $i$th sample & $y_i$ \\
\hline
\end{tabular}
\caption{Symbols used for Proof}
\label{table:proofsym}
\end{table}

In this section, we provide a sketch of a proof that the GossipGraD algorithm
converges to a similar model as SGD at the end of training phase.  We 
use the symbols described in Table~\ref{table:proofsym}.
Recall that for a classification network. the \textit{cost function} is the
cross-entropy function defined by \[C(w)=-\frac{1}{n}\sum_{i=1}^n
y_i^L\cdot\log(y_i)\] 

Let the number of nodes be $p$ and assume that the data is distributed between nodes.

\begin{lemma}
\label{lemma:cost}
With shuffling, the compute nodes have the same cost function.
\end{lemma}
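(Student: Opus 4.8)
The plan is to exploit the additive structure of the cross-entropy loss over training samples together with the combinatorial property of the ring shuffle established in Section~\ref{sec:gossipgrad}, namely that each sample is considered again by a compute node only after every other compute node has considered it once. First I would write the global cost as a sum of per-sample terms, $C(w) = \frac{1}{n}\sum_{i=1}^{n} C_i(w)$ with $C_i(w) = -y_i^L\cdot\log(y_i)$, and define node $j$'s local cost at a given time as the same average restricted to the batch currently resident on node $j$, say $C^{(j)}(w) = \frac{1}{|S_j|}\sum_{i\in S_j} C_i(w)$. Absent any shuffling these local cost functions are genuinely different functions of $w$: node $j$ only ever sees its fixed shard $S_j$, so it minimizes an objective whose minimizer depends on $S_j$, and for non-i.i.d.\ shards the minimizers disagree.

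Next I would track one full rotation of the sample ring. Because the shuffle moves each completed batch to the ring neighbor and the ring has length $p$, after $p$ shuffle steps the multiset of samples that has passed through node $j$ is exactly the whole dataset, each sample counted once — this is precisely the property quoted above. Averaging node $j$'s local cost over the $p$ successive batch assignments of one rotation therefore telescopes to $\frac{1}{n}\sum_{i=1}^{n} C_i(w) = C(w)$, independent of $j$. Hence, in the time-averaged sense over a rotation (and a fortiori over an epoch), every compute node is effectively optimizing one and the same objective, the global cost $C(w)$, which is the claim of the lemma. I would also remark that the partner-rotation mechanism uses an independent family of reshuffled communicators, so it does not interfere with this bookkeeping; the two shuffles play complementary, non-interacting roles.

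The main obstacle is pinning down the precise sense of ``same cost function'': it is not that the instantaneous per-batch losses coincide across nodes at a fixed step, but that their averages over a full shuffle cycle do. I would make this rigorous in one of two ways — (i) treat $w$ as fixed over one rotation, which is legitimate since the statement concerns the objective landscape rather than the optimization trajectory, or (ii) phrase the conclusion in expectation over the (near-)uniform batch assignment induced by the shuffle, giving $\mathbb{E}[C^{(j)}(w)] = C(w)$ for every $j$. A secondary technical point is boundary effects when $n$ is not an exact multiple of $b\,p$; I would handle this by padding or by the standard observation that the residual term is $O(b p / n)$ and vanishes as the dataset grows, so it does not affect the asymptotic equivalence of the local objectives.
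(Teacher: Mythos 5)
Your argument is essentially the paper's own: the paper's (much terser) proof likewise observes that without shuffling each node's cost is a sum over only its resident samples, and that the ring shuffle's property---each sample is re-considered only after every other sample has passed through the node---makes the effective objective the sum over the entire dataset, hence identical across nodes. Your version is simply a more careful elaboration, making explicit the time-averaged sense in which the local objectives coincide and flagging the divisibility edge case, both of which the paper leaves implicit.
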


\begin{proof}
Without shuffling, the cost functions are summations over the samples resident
on the node.  However, shuffling ensures that over time, each sample is only
re-considered for feedforward and backpropagation once the compute node has considered every
other sample. Hence, the cost function being optimized is the summation
over all samples, and thus identical across nodes.
\end{proof}

Due to the structure of neural networks, the cost function $C(w)$ is non-convex
(this non-convexity is analyzed in detail in~\cite{dauphin2014identifying}).
The non-convexity implies that there are several possible local minima even
when executed on a single device. Even though the cost functions are same
across all compute nodes, they may end up at different local minima due to the
stochastic nature of SGD. Hence, communication of model updates is imperative to guarantee
convergence to a single local minimum. Naturally, this problem becomes worse at scale.

In section~\ref{sec:design}, we have presented communication approaches where
after each batch, the model on each compute node is averaged with another
unique compute node.  
Using the communication approaches presented earlier, we obtain:

\begin{theorem}
\label{thm:main}
Each compute node in GossipGraD converges to a local minimum of the cost function.
\end{theorem}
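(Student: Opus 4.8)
The plan is to model the GossipGraD dynamics as a perturbed consensus-plus-gradient iteration and then invoke standard results on distributed non-convex optimization. Write $w_i^{(t)}$ for the parameter vector held by node $i$ after step $t$. Each step decomposes into a local update $w_i^{(t)}\mapsto w_i^{(t)}+\lambda\,\nabla_w C_i(w_i^{(t)})$ followed by the dissemination exchange, in which node $i$ averages its parameters with the one or two partners dictated by the current virtual topology. The key structural observation is that, because every node sends to exactly one partner and receives from exactly one partner at each step, the averaging step is left-multiplication by a doubly stochastic matrix $M^{(t)}$ whose sparsity pattern is the dissemination graph at step $t$, and partner rotation merely cycles through a finite family of such matrices. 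Stacking the node vectors row-wise into $W^{(t)}$ and the per-node gradients into $G^{(t)}$, the update reads $W^{(t+1)}=M^{(t)}\bigl(W^{(t)}+\lambda\,G^{(t)}\bigr)$.

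First I would establish \emph{consensus}: the disagreement $\Delta^{(t)}=W^{(t)}-\mathbf{1}\,\bar w^{(t)}$, with $\bar w^{(t)}$ the node-average, contracts. Over any window of $\log p$ consecutive steps the product $M^{(t+\log p-1)}\cdots M^{(t)}$ has the all-ones vector as its unique dominant eigenvector, with second-largest singular value bounded by some $\rho<1$; this is exactly the ``indirect diffusion in $\log p$ steps'' property already argued for the dissemination topology, and it survives rotation because the union of the graphs over any such window is strongly connected. Hence $\|\Delta^{(t)}\|$ decays geometrically up to an additive term of order $\lambda$ times the gradient magnitude, so with a bounded or decaying learning rate $\|w_i^{(t)}-\bar w^{(t)}\|\to 0$ and all nodes asymptotically agree.

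Next I would track the average iterate. By Lemma~\ref{lemma:cost} every node optimizes the \emph{same} global cost $C$, so $\tfrac1p\sum_i\nabla_w C_i(w_i^{(t)})=\tfrac1p\sum_i\nabla_w C(w_i^{(t)})$, and since $M^{(t)}$ is doubly stochastic the exchange leaves $\bar w^{(t)}$ unchanged. Therefore $\bar w^{(t+1)}=\bar w^{(t)}+\lambda\bigl(\nabla_w C(\bar w^{(t)})+e^{(t)}\bigr)$, where the error $e^{(t)}$ is controlled by $\|\Delta^{(t)}\|$ through Lipschitz continuity of $\nabla C$. Because $e^{(t)}\to 0$, the average follows an asymptotically exact (mini-batch) gradient descent on $C$; the usual descent-lemma argument for $L$-smooth non-convex objectives then drives $\nabla_w C(\bar w^{(t)})\to 0$ (convergence of the iterates themselves under a decaying-step schedule), so $\bar w^{(t)}$ reaches a stationary point, which by the consensus step is inherited by each $w_i^{(t)}$.

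The hard part will be two-fold. First, ``converges to a local minimum'' is, without further assumptions, really ``converges to a stationary point''; upgrading it needs the strict-saddle structure of DNN losses (cf.~\cite{dauphin2014identifying}) together with the perturbations supplied by mini-batch sampling and by the rotation-induced mixing to escape saddles generically, so I would phrase the theorem as convergence to a stationary point and remark that the injected stochasticity rules out saddles in practice. Second, GossipGraD is \emph{asynchronous}, so $M^{(t)}$ and the gradients are evaluated at stale parameters; keeping both the consensus contraction and the bound on $e^{(t)}$ valid then requires a bounded-staleness assumption and an extra telescoping over the delay terms, which is the most delicate piece of the argument.
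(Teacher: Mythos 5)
Your proposal is sound as a sketch, but it takes a genuinely different route from the paper. The paper's proof works entirely on the scalar cost sequence: it invokes the positive-supermartingale convergence theorem (via the classical stochastic-approximation results it cites) and argues that $\{C(w_k)\}$ is a positive supermartingale, with the only GossipGraD-specific step being that the pairwise average $(W_{n+1,j}+W_{n+1,c_i(j)})/2$ still satisfies the ``expected update points toward a minimum'' property with probability tending to $1$. You instead work in the consensus-optimization framework: you represent the dissemination exchange as left-multiplication by a doubly stochastic mixing matrix, prove geometric contraction of the disagreement $W^{(t)}-\mathbf{1}\bar w^{(t)}$ over windows of $\log p$ steps, and then show the node-average performs inexact gradient descent on the common cost from Lemma~\ref{lemma:cost}, closing with the descent lemma. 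Your mixing-matrix observation is in fact stronger than you state: for the circulant dissemination matrices $\tfrac12(I+P^{2^k})$ the product over one full window is exactly $\tfrac1p J$, i.e.\ perfect averaging, so the spectral bound $\rho<1$ is immediate. What each approach buys: the paper's supermartingale route yields almost-sure convergence with minimal explicit assumptions but leaves the crucial third property (that averaging with a partner does not destroy the descent direction) essentially asserted rather than proved; your route makes the effect of gossip averaging fully explicit and quantitative, at the price of extra hypotheses (Lipschitz gradients, bounded or decaying step sizes, bounded staleness) that the paper never states. You are also right to flag that both arguments honestly deliver only convergence to a stationary point --- the upgrade to ``local minimum'' in the theorem statement is not justified by either proof without an additional strict-saddle or noise-injection argument, and the paper's own corollary (all nodes reach the \emph{same} minimum) leans on the consensus mechanism that your framework makes precise but the paper's does not.
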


\begin{proof}
The key technical result is~\cite{supermartingale_conv} which shows that for
any process (called a \textit{positive supermartingale}) where the expectation
value of the ${n+1}^{th}$ batch, conditioned on the first $n$ batches, is
nonnegative and at most the value of the $n$th batch, converges to a
limit with probability 1.  This result has been extended
in~\cite{saad1998online, kiwiel2001convergence} to show that SGD converges
(with probability 1) for any cost function that is continuous and which is
differentiable at all but finitely many points.

By Lemma~\ref{lemma:cost}, each compute node is optimizing a copy of the same objective
function to find a local minimum.  We follow the standard proof for SGD, where
our goal is to show that the sequence $\{C^k=C(w_k)\}_{k=0}^\infty$ is a
positive supermartingale.  Positivity follows from the structure of $C(w)$, an
average of explicitly positive terms, such as the negative of the logarithms of probabilities.
We use the properties of SGD described below to show that the expectation value of the cost is at most the
value of the cost in a given batch.

The properties of SGD that are typically used to show that it is a supermartingale are:
1) the weight differences are bounded, 2) the set of non-minima where the
gradient vanishes is encountered with probability zero, and 3) the expectation
of the difference in weights before and after each batch points towards a local
minimum (denoted by $w^*$), that is, \[\mathbb{E}\left[(w_{n+1}-w_n)\cdot
(w_{n+1}-w^*)\right]\geq 0.\]

The first two properties hold for GossipGraD because they hold for SGD, so only
the third property remains.  Here, the difference between GossipGraD and SGD is
that $w_{n+1}$ for SGD is defined by global gradient update, but $w_{n+1, j}$ for
GossipGraD on compute node $j$ is defined by $(W_{n+1,j} + W_{n+1,c_i(j)})/2$ where
$W_{i,j}$ are the weights at batch $i$ on compute node $j$ after gradient updates and
$c_i$ is the permutation at batch $i$ determining which other compute node is being
averaged into compute node $j$.

As $n$ increases, the probability that this expectation value is negative
decreases, such that for any $\epsilon>0$, there exists an $N$ such that for
all $n>N$, we have \[\mathrm{Pr}(\mathbb{E}\left[(w_{n+1}-w_n)\cdot
(w_{n+1}-w^*)\right]< 0)<\epsilon.\]  

The desired inequality does not hold only in pathological cases, where a
compute node $j$ has consistently moved away from the local minimum.  As in the
case of SGD, however, the expectation of the gradients points towards the local
minimum.  Hence, the desired inequality holds with probability 1 as the number
of batches increase, such as observed in practical cases of large datasets and
DNNs.
\end{proof}

\begin{corollary}
All compute nodes converge to the same local minimum of the cost function.
\end{corollary}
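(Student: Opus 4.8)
The plan is to derive the corollary from Theorem~\ref{thm:main} together with a \emph{consensus} estimate: Theorem~\ref{thm:main} already gives that each node's sequence of weights converges almost surely to \emph{a} local minimum, so it suffices to show that the models on different nodes agree in the limit, which then forces those limits to coincide.

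First I would quantify disagreement by the potential $\Phi_n=\sum_{j=1}^{p}\lVert w_{n,j}-\bar w_n\rVert^2$, where $\bar w_n=\tfrac1p\sum_j w_{n,j}$. Each batch transforms the models in two stages: a local gradient step $W_{n+1,j}=w_{n,j}-\lambda_n\nabla C(w_{n,j})$ (using the common cost function guaranteed by Lemma~\ref{lemma:cost}), followed by the pairwise average $w_{n+1,j}=\tfrac12\bigl(W_{n+1,j}+W_{n+1,c_n(j)}\bigr)$. The averaging stage is a doubly-stochastic mixing step, hence never increases $\Phi$; moreover, over any window of $\log p$ consecutive batches the dissemination schedule composes into a mixing operator whose spectral gap is bounded below --- this is precisely the ``all nodes have communicated indirectly within $\log p$ steps'' property established in \cref{sec:gossipgrad}, and partner rotation keeps this true across successive windows. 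The gradient stage can enlarge $\sqrt{\Phi_n}$ by at most $\lambda_n\max_j\lVert\nabla C(w_{n,j})\rVert$, which is bounded by the same ``bounded weight increments'' hypothesis invoked in Theorem~\ref{thm:main}.

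Next I would telescope over blocks of $\log p$ batches: each block contracts $\Phi$ by a factor $\rho<1$ and adds a perturbation of order $\lambda$ within the block. Under the standard Robbins--Monro step-size schedule ($\sum_n\lambda_n=\infty$, $\sum_n\lambda_n^2<\infty$, so $\lambda_n\to0$) the perturbations become summable while the geometric contraction persists, giving $\Phi_n\to0$ almost surely. Combining this with Theorem~\ref{thm:main}, $w_{n,j}\to w^*_j$ for each $j$ and $\Phi_n\to0$ forces $w^*_j=w^*_{j'}$ for all $j,j'$; hence a single $w^*$ is the common limit, which proves the corollary.

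The step I expect to be the main obstacle is establishing the uniform spectral gap for the \emph{time-varying} rotated dissemination schedule: a single static $\log p$-block already realizes a full averaging, but one must verify that the sequence of rotated communicators never degenerates into a schedule leaving part of the node graph effectively disconnected, so that the infinite product of per-block mixing operators stays uniformly contractive. Coupling this (deterministic or randomized-rotation) mixing analysis with the stochasticity of the SGD gradients --- in order to obtain an almost-sure rather than an in-expectation statement --- is the delicate point; the remaining bookkeeping is the standard stochastic-approximation argument already used for Theorem~\ref{thm:main}.
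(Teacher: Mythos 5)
Your proposal is sound in outline but takes a genuinely different route from the paper. The paper's proof is a short qualitative contradiction argument: by Theorem~\ref{thm:main} each node converges to some local minimum; if two nodes had converged to \emph{different} minima, partner rotation guarantees they (or a chain connecting them) keep being averaged, and averaging distinct limits would move the weights, contradicting convergence. Your argument instead proves consensus \emph{directly and quantitatively}: you introduce the disagreement potential $\Phi_n=\sum_j\lVert w_{n,j}-\bar w_n\rVert^2$, show the gossip stage is non-expansive and that a $\log p$-block of the dissemination schedule acts as a full averaging operator (so the block contraction is in fact exact, not merely a spectral-gap bound --- your worry there is somewhat overcautious, since each rotated communicator still composes to the global average over its block), and control the gradient perturbations via a Robbins--Monro step size to get $\Phi_n\to 0$ almost surely. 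What each buys: the paper's argument is elementary and needs no step-size assumption beyond what Theorem~\ref{thm:main} already uses, but it is informal --- it treats ``converged'' as exact stationarity at finite time, and making it rigorous requires essentially the limiting argument you sketch. Your route yields a rate of consensus and decouples agreement from convergence of the individual iterates, at the cost of an extra hypothesis ($\lambda_n\to 0$, $\sum_n\lambda_n^2<\infty$) that the paper never states and that its actual training regimen (piecewise-constant learning rates for ResNet50) does not satisfy. Both arguments rest on Lemma~\ref{lemma:cost} and Theorem~\ref{thm:main} in the same way, so your proposal is best viewed as a strengthened, more rigorous variant of the paper's claim rather than a flawed one.
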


\begin{proof}
Theorem~\ref{thm:main} states that the model on each compute node converges to
some local minimum.  We prove that they converge to the same local minima by
using contradiction.  Let us assume that all models have converged, but not to
the same local minimum.  Then, for some batch, models that are at different
local minimum are averaged together due to partner rotation.  However, this
would cause the weights of these models to change, contradicting the assumption
that they have already converged.  Therefore, all models must have converged to
the same local minimum.
\end{proof}

\section{Performance Evaluation}
\label{sec:exp} 
In this section, we present an in-depth evaluation of GossipGraD and its
associated heuristics using datasets such as MNIST, CIFAR10 and well-studied neural network topologies on
ImageNet-1K dataset. The table~\ref{table:arch} provides a description of the
architectures used for performance evaluation.  Table~\ref{table:datasets}
provides a brief description of the datasets used for performance evaluation.
Table~\ref{table:heur_desc} provides a description of the heuristics.

\begin{table*}[!htbp] 
		\centering
		\begin{tabular}{|c|c|c|c|c|c|c|c|c|}
				\hline
				Name  & CPU (\#cores) & \# GPUs/node & Baseline & Network & MPI &  Nodes & \#cores & \# GPUs\\
				\hline 
				\bf{P100} & Power8 (20) & NVIDIA Pascal P100 (4) & NVIDIA-Caffe\cite{caffe-nvidia} & NVLink, IB-EDR  & IBM MPI & 32 & N/A & 128\\
				\bf{KNL} & Intel KNL (68) & N/A & Intel-Caffe\cite{caffe-intel} & Cray Aries & Cray MPICH & 32 & 2176 & N/A\\
				\hline
		\end{tabular}\\  
		\caption{A description of system architecture and associated software}
		\label{table:arch}  
\end{table*}

\begin{table*}[!t] 
\centering
\begin{adjustbox}{max width=\textwidth}
\begin{tabular}{|c|c|c|c|c|c|c|c|c|}
\hline
Dataset  & Neural Network     & Description & Training Samples & Validation Samples & Image Size            & Classes\\
\hline 
MNIST~\cite{mnistlecun}    & LeNet3~\cite{lecun1998gradient} & Handwritten Digits & 60000            & 10000          &     $28 \times28$ & 10        \\
CIFAR-10~\cite{Krizhevsky09learningmultiple} & CIFARNet  & Small Images & 60000            & 10000              & $32 \times32\times3$  & 10  \\
ImageNet & GoogLeNet~\cite{43022} & Diverse Images & 1281167          & 50000              & $256\times256\times3$ & 1000  \\
ImageNet & ResNet~\cite{he2016deep} & Diverse Images & 1281167          & 50000              & $256\times256\times3$ & 1000\\
\hline
\end{tabular}
\end{adjustbox}
\caption{A description of datasets and associated neural network topologies}
\label{table:datasets}  
\end{table*}

\begin{table*}[!t] 
		\centering
        \small
		\begin{tabular}{|c|c|c|l|}
				\hline
				Name  & Type  & Implemented & Description of DL Algorithm and Implementation\\
				\hline
				AGD & Asynchronous GD & Yes & Implements SGD by asynchronous layer-wise communication. \\
				GossipGraD & Batch-wise Gossip GD with Rotation of partners and samples& Yes & Implements batch-wise AGD using Gossip and Rotation of Samples \\
				\hline
		\end{tabular}\\  
		\caption{A description of GossipGraD Approaches. We implement these approaches and evaluate them on GPU and CPU systems.}
		\label{table:heur_desc}  
\end{table*}

\subsection{Baseline Setup}
DL algorithms have a few hyperparameters which are setup at the start of the
training phase. These include the batch size, learning rate and momentum. We
fix the momentum to be exactly same as provided by default versions of
Intel-Caffe and NVIDIA-Caffe. Since we use weak-scaling, the effective batch
size is multiplied by the number of devices (such as number of GPUs or number
of compute nodes).
For handling weak-scaling with SGD/AGD, we use suggestions by
Krizhevsky~\cite{oneweirdtrick} that multiplies the learning rate on a single device by $\sqrt(p)$.
For GossipGraD, we use
the same batch size on each compute node as defined for a single device (such
as a single GPU or a single KNL compute node), and keep the learning rate unchanged. 
We use AGD -- a theoretically equivalent asynchronous gradient descent
implementation -- as suggested by PowerAI, Caffe2 and Chainer as the baseline
for performance and accuracy comparisons.  We consider AGD to be a better baseline than
default SGD since AGD provides better performance than the SGD, and provides theoretical equivalence to SGD.

\subsection{Comparing Gossip to AGD: Evaluation on (LeNet3) MNIST and (CIFARNet) CIFAR10}
\subsubsection{Performance Results}
Figures~\ref{fig:MNIST-speedup} and~\ref{fig:CIFAR10-speedup} show the
relative speedup of GossipGraD on MNIST and CIFAR10 datasets to AGD implementation on KNL and
P100 clusters respectively.  MNIST on 32 GPUs needs 1.2s per epoch (batch size
on each device is 64), while CIFAR10 on 32 GPUs needs 0.75s per epoch (batch
size on each device is 100).
Both MNIST and CIFAR10 are relatively small
datasets, but provide a delicate balance between communication (since network
sizes are small) and computation (since compute is relatively small). We
observe the following: 1) relative speedup of GossipGraD to AGD for P100 on both
MNIST and CIFAR10 is higher in comparison to KNL, since a single P100 GPU is
much faster than single KNL node, and 2) even with weak-scaling, the relative
speedup continues to increase since GossipGraD is able to overlap the
communication effectively due to $O(1)$ communication, while AGD is unable to
do so even in the presence of a layer-wise asynchronous communication.

\subsubsection{Accuracy Results}
Figure~\ref{fig:MNIST} shows the accuracy charts for GossipGraD on KNL,
GossipGraD on P100 and AGD implementation as a function of number of epochs. We
use 32 P100 GPUs (8 compute nodes) and 32 KNL compute nodes for the evaluation
to validate whether the accuracies match at the largest scale of evaluation on
MNIST considered in this paper.  We observe that the validation accuracy
saturates to $\approx$ 99.2\% for three implementations -- which empirically
proves the argument that GossipGraD provides similar accuracy as AGD
implementation. Figure~\ref{fig:CIFAR10} compares the accuracy charts for
GossipGraD on KNL, GossipGraD on P100 and AGD implementation using 32 KNL nodes
and 32 GPUs on CIFAR10 dataset. The three implementations roughly track each
other and converge towards similar accuracy ($\approx$ 72-73\%), which is
within the margin of error.

\subsubsection{Lessons Learned}
We observe that Gossip provides similar accuracy while providing significant
speedup in comparison to the AGD implementation even on relatively small scale
KNL and P100 clusters. We observe about 1.9x speedup for MNIST since GossipGraD
is able to overlap the communication of both samples and model updates with the
feedforward and backpropagation step. We also conclude that GossipGraD is able to completely overlap the communication -- which implies that it is useful for small size datasets as well.

\begin{figure*}[htbp]
\begin{minipage}[t]{0.66\columnwidth}
\centering
\includegraphics[width=\columnwidth]{./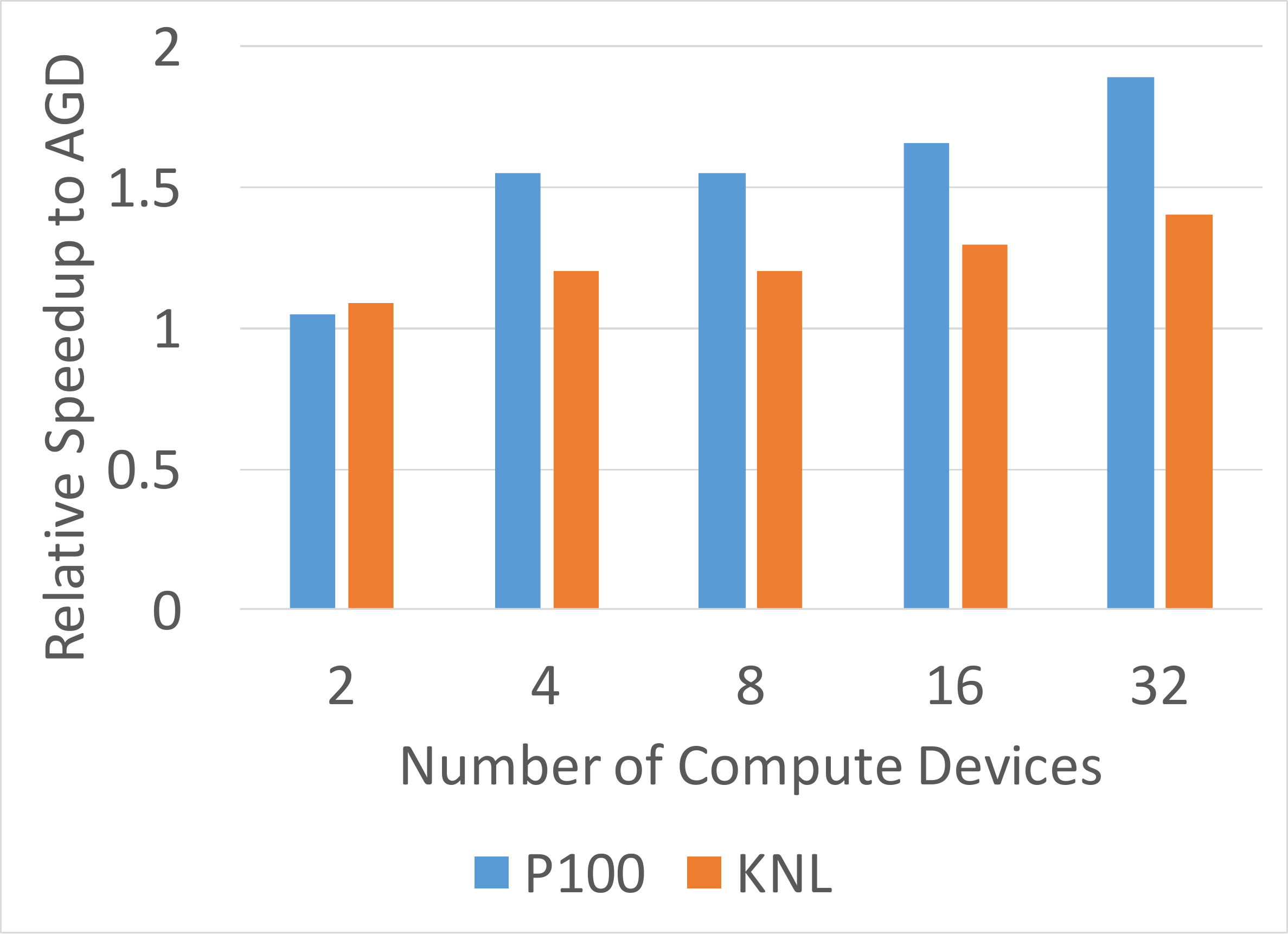}
\caption{\small Relative Speedup of Gossip to AGD for P100 and KNL Clusters on MNIST dataset}
\label{fig:MNIST-speedup}
\end{minipage}
\hfill
\begin{minipage}[t]{0.66\columnwidth}
\centering
\includegraphics[width=\columnwidth]{./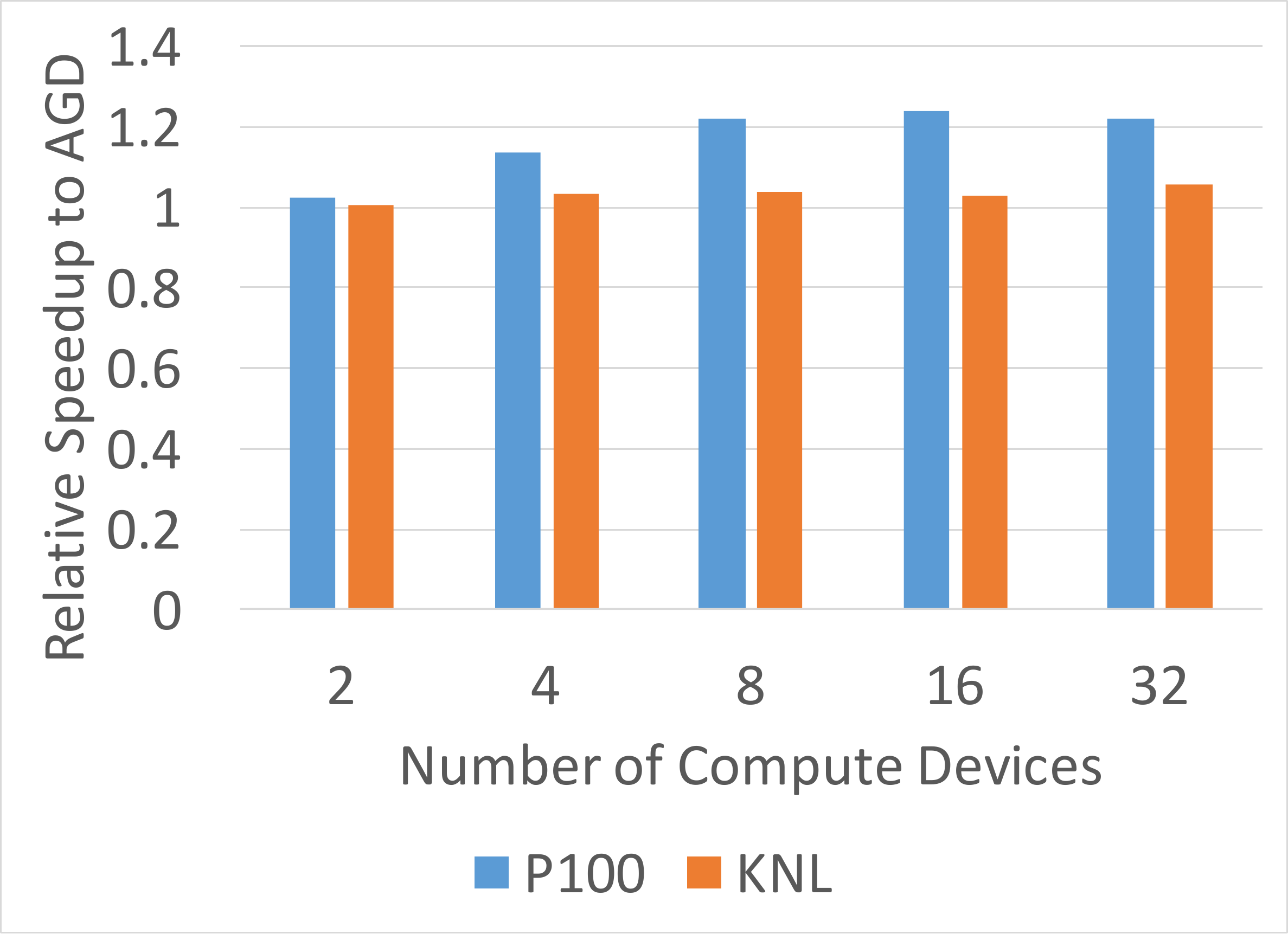}
\caption{\small Relative Speedup of Gossip to AGD for P100 and KNL Clusters on CIFAR10 dataset}
\label{fig:CIFAR10-speedup}
\end{minipage}
\hfill
\begin{minipage}[t]{0.66\columnwidth}
\centering
\includegraphics[width=\columnwidth]{./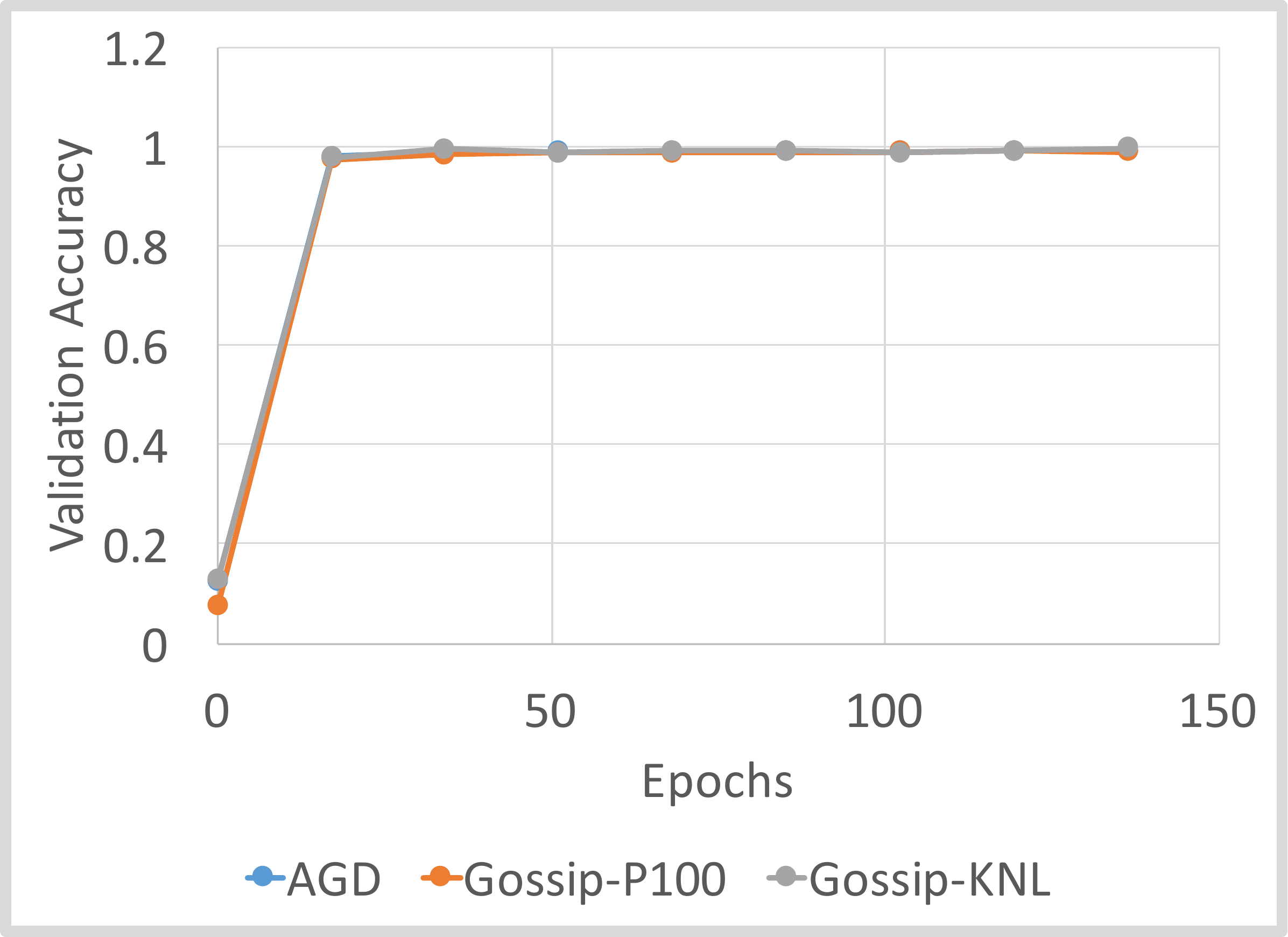}
\caption{\small Validation Accuracies of AGD, Gossip on KNL and Gossip on GPU Clusters on MNIST dataset}
\label{fig:MNIST}
\end{minipage}
\end{figure*}


\subsection{Comparing GossipGraD to AGD: ResNet 50 Evaluation}
\subsubsection{Performance Comparisons}
Table~\ref{table:gossipvspowerai} shows the compute efficiency of GossipGraD and
compares it with published efficiency on PowerAI~\cite{powerai} using up to 128 P100 GPUs.
We use a batch size of 32 on each device for GossipGraD implementations --
which is the batch size used by other researchers. The GossipGraD implementation use
asynchronous \texttt{MPI\_TestAll} based implementation. Since we use
weak-scaling, the overall batch size on 128 GPUs is 4096.  We observe that the
overall compute efficiency of GossipGraD is $\approx$ 100\% independent of the
number of GPUs. The PowerAI implementation achieves 100\% efficiency
for 4 and 8 GPUs but continues to decrease to 95\% on 128 GPUs.
At 32 batch size per device for GossipGraD, the feedforward and
backpropagation time is about 96ms (providing 10.4 batch updates/second).
The synchronous point-to-point communication time is 27ms which is completely
overlapped by the GossipGraD implementation.

\subsubsection{Accuracy Comparisons}
Figure~\ref{fig:RESNET-acc} shows the validation accuracy as a function of
epochs for ResNet50 using GossipGraD on 128 P100 GPUs. 
ResNet50 uses a step learning training regimen, which is executed
for 100 epochs. After every 30 epochs, the current learning rate is multiplied
by 0.1. 
Since ResNet50 is well
studied by other researchers including PowerAI~\cite{powerai}, Caffe2~\cite{goyal:arxiv17} and Chainer~\cite{chainer}, we compare
the GossipGraD validation accuracy after a fixed number of epochs with these
approaches published elsewhere. 

For GossipGraD, we use the original learning rate provided for ResNet50 (0.1)
and follow the training regimen presented earlier. We compute the validation
accuracies after every $\approx$ 5 epochs. At about 30 epochs, we see an accuracy of
50\%. This  which is the accuracy reported by Chainer, PowerAI and Caffe2 for 30
epochs. Hence, GossipGraD provides similar accuracy as well-published
literature while providing nearly perfect overlap of communication with
computation.

\begin{table}[!htbp] 
		\centering
		\begin{tabular}{|c|c|c|c|c|c|c|c|c|}
				\hline
				Name  & 4 & 8 & 16 & 32 & 64 & 128 \\
				\hline 
				GossipGraD  & 100 & 100 & 100 & 100 & 100 & 100 \\
				\hline
				PowerAI  & 100 & 100 & 98 & 99 & 97 & 95 \\
				\hline
		\end{tabular}\\  
		\caption{\small Compute Efficiency (\%) for GossipGraD and PowerAI~\cite{powerai} using up to 128 P100 GPUs. PowerAI performs better than Caffe2, since it uses hierarchical rings based all-to-all reduction}
		\label{table:gossipvspowerai}  
\end{table}


\begin{figure*}[htbp]
\begin{minipage}[t]{0.66\columnwidth}
\centering
\includegraphics[width=\columnwidth]{./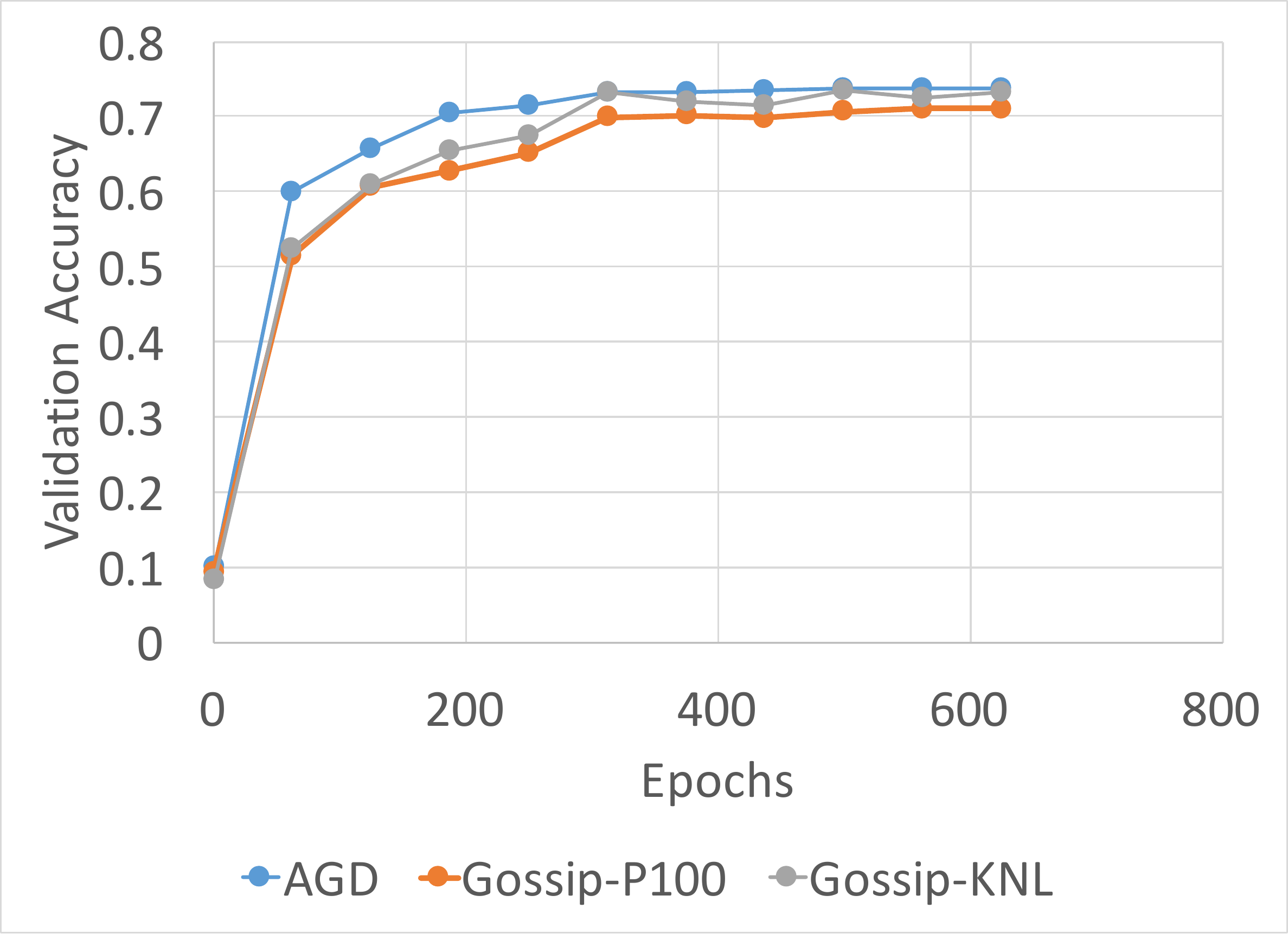}
\caption{\small Validation Accuracies of AGD, GossipGraD on KNL and GossipGraD on GPU Clusters for CIFAR10 dataset}
\label{fig:CIFAR10}
\end{minipage}
\hfill 
\begin{minipage}[t]{0.66\columnwidth}
\centering
\includegraphics[width=\columnwidth]{./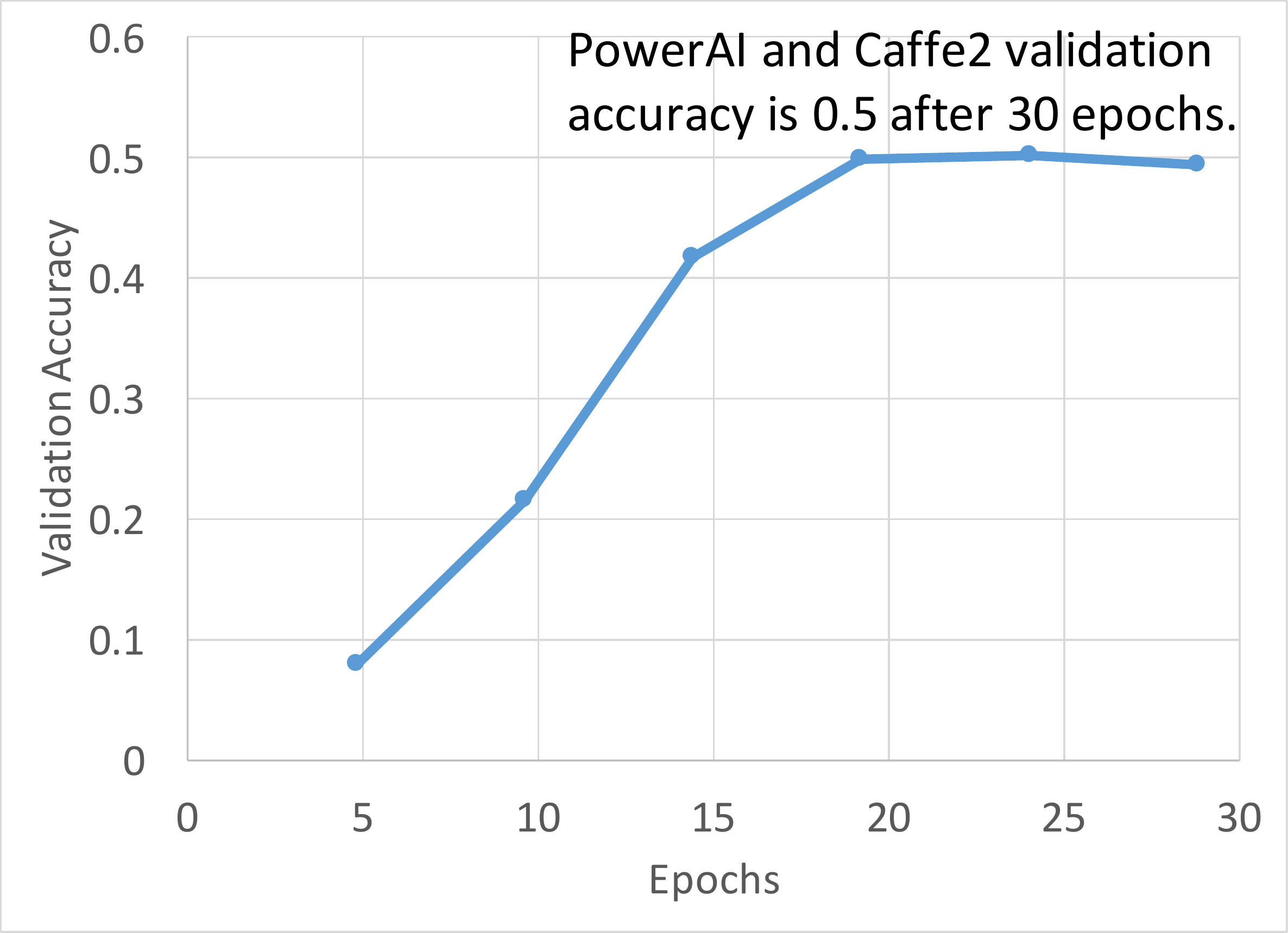}
\caption{\small Accuracy of GossipGraD for 128 P100 GPUs for ResNet50}
\label{fig:RESNET-acc}
\end{minipage}
\hfill
\begin{minipage}[t]{0.66\columnwidth}
\includegraphics[width=\columnwidth]{./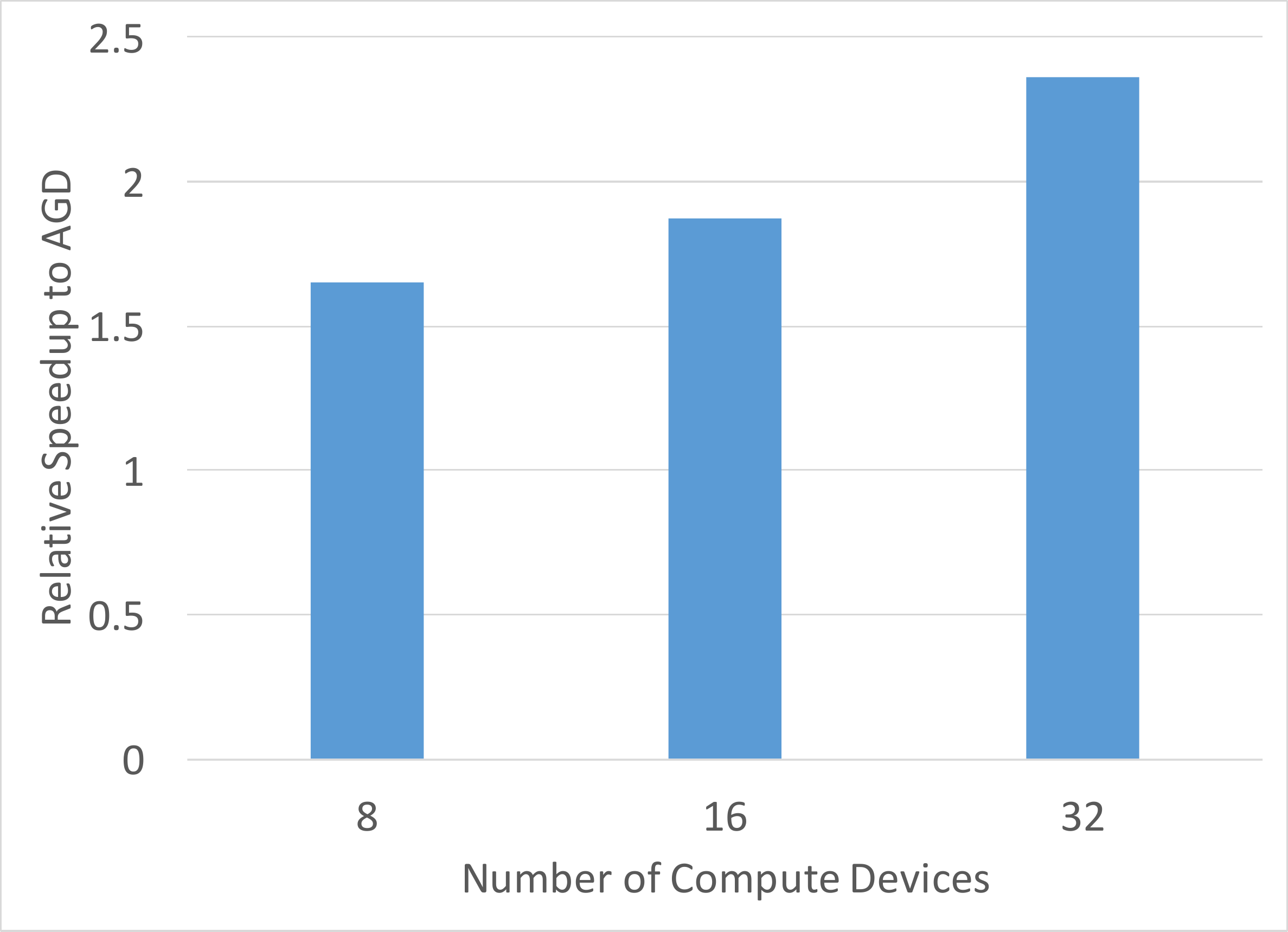}
\caption{\small Relative Speedup of GossipGraD to AGD for P100 Cluster for GoogLeNet}
\label{fig:GNET-speedup}
\end{minipage}
\end{figure*}

\subsection{Comparing GossipGraD with AGD: Evaluating GoogLeNet}
Figure~\ref{fig:GNET-speedup} shows the relative speedup of GossipGraD in
comparison to AGD using up to 32 GPUs for GoogLeNet respectively. We use a
batch size of 16 on each compute node for both GossipGraD and AGD. We observe:
1) relative speedup for GossipGraD increases with scale -- even on weak scaling
-- since the overall communication time increases. GoogLeNet has 5M parameters
(20M floats) -- which is much smaller than ResNet which has 25M parameters.
However, ResNet50 is relatively computationally expensive than GoogLeNet.

It is worthwhile observing that for GoogLeNet as well, we observe a $\approx$
100\% computation efficiency at all compute nodes -- which implies that
GossipGraD is effective for several architectures, and independent of the neural network topology.
Figure~\ref{fig:GNET-loss} shows the loss chart for GoogLeNet by comparing AGD
and GossipGraD using 32 P100 GPUs. We compare the loss over time. During this
time, GossipGraD has only covered 10\% of overall iterations -- which implies
that there is a significant time remaining in the overall training phase.
However, even at this short time GossipGraD provides similar or better loss
(since lower loss is better) in comparison to AGD.

\begin{figure}[htbp]
\includegraphics[width=\columnwidth]{./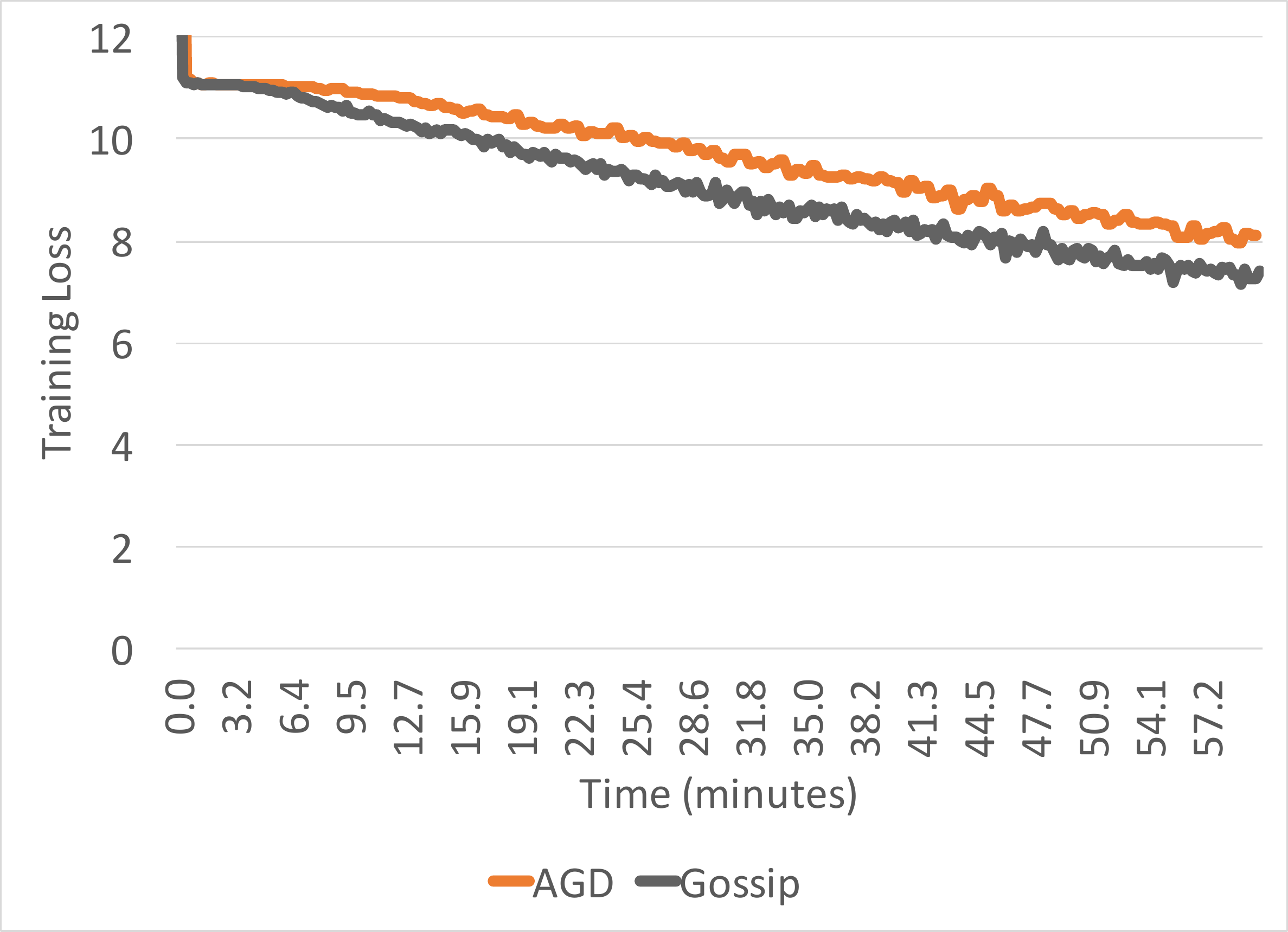}
\caption{\small Training Loss after one hour of GossipGraD and AGD on 32 P100 GPUs with GoogLeNet}
\label{fig:GNET-loss}
\end{figure}

\subsection{Comparing GossipGraD with AGD Communicating Every Log(p) Steps}
Another approach to achieve $O(1)$ communication is to combine the
models every $log(p)$ steps instead of every step.
Figure~\ref{fig:mnist-logp} compares the performance of this approach to
GossipGraD using the LeNet3 network. If the cost of the $log(p)$
reduction (AGD) can be amortized over $log(p)$ steps the performance is
improved. For the every-$log(p)$ approach, the performance is trending
slightly positive while GossipGraD remains flat. However, the
performance of GossipGraD is still greater.  Though these two approaches
might eventually perform similarly at large scales, the effect on
validation accuracy cannot be ignored. Though we expect all approaches
to reach target accuracy if tuned with appropriate hyperparameters, for
those cases in Figure~\ref{fig:mnist-logp} only GossipGraD was learning.
This further shows that GossipGraD is less susceptible to incorrect
hyperparameter tuning as one scales.

\begin{figure}[htbp]
\includegraphics[width=\columnwidth]{./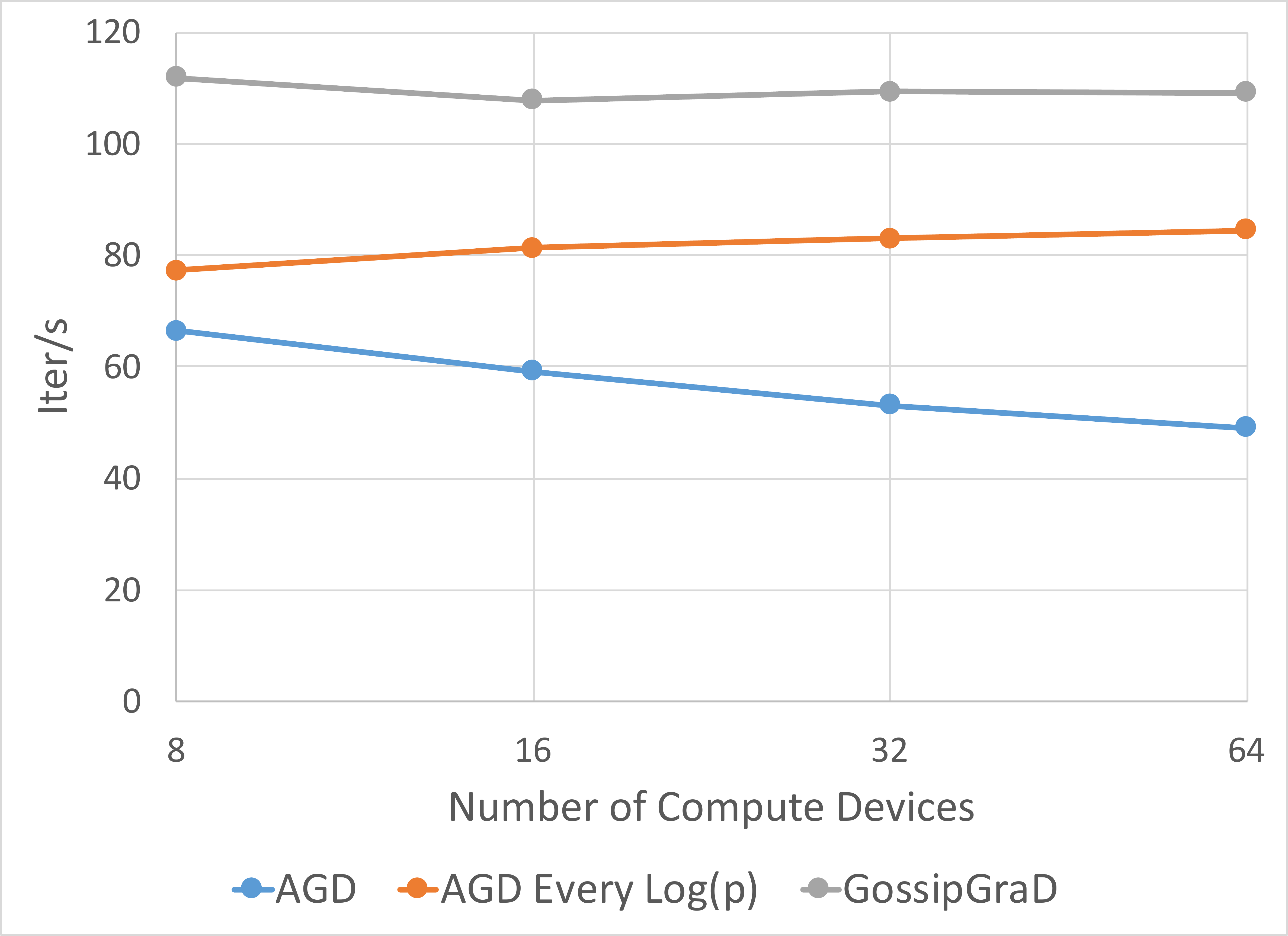}
\caption{\small Performance of GossipGrad versus Computing AGD Every Log(p) Iterations}
\label{fig:mnist-logp}
\end{figure}

\subsection{Discussion}
We evaluated GossipGraD on several dimensions including different datasets,
their associated neural network topologies, and GPU/CPU architectures on the
metrics of performance and accuracy.  We observe that GossipGraD provides
complete overlap of communication with computation (as expected), and similar
accuracy/validation loss as well studied SGD/AGD. GossipGraD also does not
require significant hyper-parameter tuning -- which makes it an attractive
option for generic datasets. A possible hyper-parameter to tune would be to have a batch size just small enough which can be completely overlapped with computation -- however, we have avoided making any changes to batch sizes so that our results are easily reproducible using hyper-parameters available on NVIDIA-Caffe and Intel-Caffe.

\section{Related Work}
\label{sec:related}
Gradient Descent (GD) is the most widely used algorithm for implementing Deep
Learning (DL) algorithms.  Since GD is slow, a variant which selects a random
subset ({\em batch}) of the original dataset is used -- referred as
batch/stochastic gradient descent (SGD).  Several frameworks provide high
performance implementations of SGD.  The most widely used implementations are
Caffe~\citep{jia2014caffe} (CPUs/GPUs), Warp-CTC (GPUs),
Theano~\citep{Bastien-Theano-2012, bergstra+al:2010-scipy} (CPUs/GPUs),
Torch~\citep{Collobert02torch:a} (CPUs/GPUs), CNTK~\citep{export:226641} (GPUs
and Distributed Memory using MPI) and Google
TensorFlow~\citep{tensorflow2015-whitepaper} which use NVIDIA CUDA Deep Neural
Network (cuDNN) library. 
We use SGD as the baseline for designing and
implementing GossipGraD and its variants.  As evident, deep neural networks
(DNN) -- which store the model for DL algorithms -- suffer from a variety of
problems, such as vanishing gradient~\citep{Bianchini2014}.  Hinton {\em et
al.}~\citep{Hinton06afast, NIPS2006_3048} proposed a solution to address this
problem, by layer-wise training
\textit{autoencoders}~\citep{HintonSalakhutdinov2006b}. Individual layers may
be coalesced together to form a single
DNN~\citep{Vincent:2010:SDA:1756006.1953039}. 

Several researchers have proposed methods to scale DL algorithms on distributed
memory systems. Table~\ref{table:newcomparisons} shows a table of distributed DL implementations with comparisons on several metrics.
We classify these approaches among {\em parameter server} based
and non-parameter server-based approaches.
Distbelief~\citep{NIPS2012_0598} is an approach proposed by Dean {\em et al.},
which uses a parameter server (PS) for model updates at a central server.
Several PS optimizations specific to GPU have been proposed in approaches such
as GeePS~\cite{geeps} and MXNET~\cite{mxnet}, and techniques to address delay
compensation~\cite{zheng:arxiv16}. Chen {\em et al.} have studied the
limitations of PS and reported that parameter servers quickly become a
bottleneck, require expensive {\em warm up phase} (during which training is
conducted on a single compute node, till loss starts to decrease). 
However, existing
Spark implementations typically use sockets -- which is not optimal for HPC
interconnects.  To leverage HPC interconnects effectively, several Remote
Direct Memory Access (RDMA) based PS approaches have been
proposed~\cite{deepimage, li:nips14}.  However, they primarily suffer from
typical convergence problems and performance bottlenecks as observed by other
researchers~\cite{li:nips14}.

Several researchers have proposed techniques to address the limitations of
parameter server based DL implementations~\cite{de:arxiv15, scaffe,matex,
zheng:arxiv16, das:arxiv16}.  These DL implementations use model/data
parallelism. Das {\em et al.} have proposed hybrid parallelism for scaling DL
implementations. However, given the increasing depth of convolutional layers
and suggested by Krizhevsky~\cite{oneweirdtrick}, the majority of DL
implementations primarily focus on data parallelism -- which is the focus of
this paper as well. 

Data parallelism approaches for scaling DL require hyperparameter tuning
as the scale increases~\cite{oneweirdtrick,you2017imagenet}. Though this
work did not focus on hyperparemter tuning, we expect GossipGraD would
also benefit from hyperparameter tuning such as the proposed LARS
method~\cite{you2017imagenet,you2017scaling}, RMSprop
warm-up~\cite{tieleman2012}, and slow-start learning rate
schedules~\cite{akiba2017,goyal2017}.

\section{Conclusions}
\label{sec:conclusions}
In this paper, we have presented GossipGraD, which is a novel gossip communication protocol
based Stochastic Gradient Descent (SGD) algorithm for scaling Deep Learning
(DL) algorithms on large-scale systems. GossipGrad has reduced the overall communication complexity from $\Theta(\log(p))$ for
$p$ compute nodes in well-studied SGD to $O(1)$ and considered diffusion such that
compute nodes exchange their updates (gradients) indirectly after every
$\log(p)$ steps. It also considers rotation of communication partners for facilitating direct
diffusion of gradients and  asynchronous distributed shuffle of samples during
the feedforward phase in SGD to prevent over-fitting.
We have implemented GossipGraD for GPU and CPU clusters and use
NVIDIA GPUs (Pascal P100) connected with InfiniBand, and Intel Knights Landing
(KNL) connected with Aries network. We evaluate GossipGraD using well-studied
dataset ImageNet-1K ($\approx$ 250GB), and widely studied neural network
topologies such as GoogLeNet and ResNet50.
Our performance evaluation
using both KNL and Pascal GPUs has indicated  that GossipGraD can achieve perfect
efficiency for these datasets and their associated neural network topologies.
Specifically, for ResNet50, GossipGraD is able to achieve $\approx$ 100\% compute
efficiency using 128 NVIDIA Pascal P100 GPUs -- while matching the top-1 classification accuracy
published in literature.

\bibliographystyle{IEEEtran}
\balance
\bibliography{distdl,FasterLearning,extra,vishnu,apoptosis,agd,mathstat}

\newpage
\appendix
\section{Artifact Description: [GossipGraD: Gossip Protocol Based Asynchronous Gradient Descent for Scalable Deep Learning]}

\subsection{Abstract}

This artifact description contains information needed to reproduce the
software environments used by the experiments of the associated
submission. The source code contains changes to open source
software packages Intel-Caffe and NVIDIA-Caffe.

\subsection{Description}

\subsubsection{Check-list (artifact meta information)}

{\small
\begin{itemize}
    \item {\bf Algorithm: } Deep Learning Networks AlexNet and GoogLeNet
\item {\bf Program: } Intel-Caffe, NVIDIA-Caffe
  \item {\bf Compilation: } Sandy-Bridge -- intel/16.1.150; Pascal -- IBM XL C/C++ for Linux, V13.1.5
  \item {\bf Data set: } ImageNet ilsvrc2012
  \item {\bf Hardware: } 2 10-core Sandy-Bridge sockets and 128GB memory per
      node; 2 10-core IBM POWER8 CPUs 256GB RAM with 4 NVIDIA Tesla P100
      GPUs 16GB HBM2 memory per node
  \item {\bf Experiment workflow: } Clone architecture-specific
      versions of Caffe, install all caffe dependencies, apply custom
      patches to add parallel netcdf reader; parallel algorithms; and
      command-line arguments for caffe tool, compile, run caffe tool
      with updated solver parameters.
  \item {\bf Experiment customization: } Upstream-compatible patches
      were created for Intel-Caffe and NVIDIA-Caffe that add a parallel
      netcdf DataLayer as well as additional parallel modules. The
      parallel modules wrap a standard Caffe::Solver instance and
      implement the Caffe::Solver::Callback interface in order to
      communicate network data and biases as needed. Lastly, mini-batch
      sizes were set to 16 instead of their customary defaults for
      AlexNet and GoogLeNet.
  \item {\bf Publicly available?: } Yes
\end{itemize}
}

\subsubsection{How software can be obtained (if available)}

Intel-Caffe is available from https://github.com/intel/caffe. Our
software branches from the master branch at commit ID
dated 29 December 2016\\
0bc848bd32d17d5f17bfc7e20915e9805fd1f180 and is available at
https://github.com/matex-org/caffe-intel.

NVIDIA-Caffe is available from https://github.com/NVIDIA/caffe. Our
software branches from the master branch at commit ID
dated 27 December 2016 6d723362f0f7fe1aaba7913ebe51cc59b12c0634 and is
available at https://github.com/matex-org/caffe-nvidia.

\subsubsection{Hardware dependencies}

Intel-Caffe is optimized for Intel CPUs, specifically those featuring
advanced vector intrinsics such as AVX. Intel-Caffe automatically
downloads a custom MKL library for use with Intel-Caffe if a sufficient
version is not automatically located.

NVIDIA-Caffe is specifically tuned for GPUs feauting the latest CUDA and
cuDNN versions, specifcally CUDA 8.0 and cuDNN 5.1 at the time of
manuscript preparation. The Pascal cluster evaluation took
advantage of coherent memory between the GPU and CPU for MPI
communication using IBM Spectrum MPI, however this feature is not
strictly required for the algorithms and advances presented.

\subsubsection{Software dependencies}

Both Intel-Caffe and NVIDIA-Caffe share the same list of software
dependencies with their upstream BVLC-Caffe implementation. IBM also
provides a version of Caffe that will compile on the POWER architecture
but is otherwise identical to BVLC-Caffe. We copied the specific POWER
assembly code for `pause' that was necessary to compile on the Pascal
cluster from their github repository at https://github.com/ibmsoe/caffe.

\subsubsection{Datasets}

To evaluate AlexNet and GoogLeNet networks we use
ImageNet Large Scale Visual Recognition Challenge 2012 (ILSVRC2012). For
downloading ImageNet database please follow url:
http://caffe.berkeleyvision.org/gathered/examples/imagenet.html.
Standard Caffe installations have instructions for creating LMDB
databases from these datasets.

We customize the datasets by converting the LMDB databases into netCDF
files using a Python script and the pupynere Python package for creating
netCDF files. The images were stored as a byte datatype with a
corresponding integer for the classification labels. Parallel netCDF was
then used for parallel reading of the datasets.

\subsection{Installation}

Intel-Caffe and NVIDIA-Caffe, with or without our modifications, install
as if following the well-known instrutions for the upstream BVLC-Caffe
with only a few minor exceptions. An MPI compiler is used by
setting the CUSTOM\_CXX variable in the Makefile.config to use, e.g.,
mpicxx. The parallel netCDF library is not customarily used with Caffe
and must be added to the linker flags. Intel-Caffe recognizes and uses
the preprocessor symbol USE\_MPI and it must be set to 1. 

\subsection{Experiment workflow}

Caffe uses two related prototxt files per evaluation, one to describe
the solver and the other to describe the network. We used prototxt files
that come standard with any Caffe distribution. We modified the network
prototxt files to use our custom parallel netCDF reader. Otherwise, we
modified the network prototxt files to use a batch size of 16 for the
majority of our experiments.

The majority of our experiments were evaluating weak scaling and thus
used the same prototxt input files.  The only exception was when weak
scaling the AGD runs.  We increased the learning rate each time we
doubled the number of compute devices, e.g., Sandy-Bridge nodes, GPUs, 
based on the observations made by Krizhevsky et al., increasing
by a factor of $\sqrt 2$ each time.

\subsection{Evaluation and expected result}

Intel-Caffe and NVIDIA-Caffe output is unchanged with respect to
BVLC-Caffe. Any traditional analysis of such output remains valid. Any
reported accuracy or loss numbers come directly from this standard Caffe
output.

\subsection{Experiment customization}

Besides the changes made to customary batch sizes, the experimentation
could be considered as completely customized due to our use of
proprietary communication code additions to available open source
software packages.


\end{document}